\documentclass[10pt,aps,amsfonts,prb,twocolumn,showpacs,superscriptaddress]{revtex4-1}
\usepackage{verbatim,epsfig,amsmath,amssymb,bm,epsf,graphicx,psfrag,bbold,amsthm,amsfonts}
\usepackage{hyperref}
\usepackage{enumitem}
\setlist{nosep}
\usepackage[all]{xy}
\usepackage{color}
\newtheorem{theorem}{Theorem}

\newcommand{\ket}[1]{|#1\rangle}
\newcommand{\bra}[1]{\langle #1|}
\newcommand{\innerproduct}[2]{\langle #1| #2\rangle}
\newcommand{\outerproduct}[2]{|#1\rangle\langle #2|}

\newcommand{\ztwo}{\mathbb{Z}_2}

\newcommand{\zthree}{\mathbb{Z}_3}
\newcommand{\ea}{\chi^{SG}_{\zthree}}

\newcommand{\thatis}{$i.e.~$}

\begin{document}

\title{Eigenstate phases with finite on-site non-Abelian symmetry}

\author{Abhishodh Prakash}
\affiliation{Department of Physics and Astronomy, State University of New York at
	Stony Brook, Stony Brook, NY 11794-3840, USA}
\affiliation{C. N. Yang Institute for Theoretical Physics, Stony Brook, NY 11794-3840, USA} 
\author{Sriram Ganeshan}
\affiliation{Simon's Center for Geometry and Physics, Stony Brook, NY 11794-3840, USA} 

\author{Lukasz Fidkowski}
\affiliation{Department of Physics and Astronomy, State University of New York at
	Stony Brook, Stony Brook, NY 11794-3840, USA}

\author{Tzu-Chieh Wei}
\affiliation{Department of Physics and Astronomy, State University of New York at
	Stony Brook, Stony Brook, NY 11794-3840, USA}
\affiliation{C. N. Yang Institute for Theoretical Physics, Stony Brook, NY 11794-3840, USA} 
	
\date{\today}

\begin{abstract}
{ We study the eigenstate phases of disordered spin chains with on-site finite non-Abelian symmetry. We develop a general formalism based on standard group theory to construct local spin Hamiltonians invariant under any on-site symmetry. We then specialize to the case of the simplest non-Abelian group, $S_3$, and numerically study a particular two parameter spin-1 Hamiltonian. We observe a thermal phase and a many-body localized phase with a spontaneous symmetry breaking (SSB) from $S_3$ to $\mathbb{Z}_3$ in our model Hamiltonian. We diagnose these phases using full entanglement distributions and level statistics.  We also use a spin-glass diagnostic specialized to detect spontaneous breaking of the $S_3$ symmetry down to $\mathbb{Z}_3$. Our observed phases are consistent with the possibilities outlined by Potter and Vasseur [Phys. Rev. B 94, 224206 (2016)], namely thermal/ ergodic and spin-glass many-body localized (MBL) phases. We also speculate about the nature of an intermediate region between the thermal and MBL+SSB regions where full $S_3$ symmetry exists.}
\end{abstract}

 \maketitle
 
\section{Introduction}
Statistical mechanics and thermodynamics are bridges that connect microscopic laws such as Newtonian and quantum mechanics to macroscopic phenomena that we measure in the laboratory. The validity of statistical physics relies on the existence of thermal equilibrium. For an isolated quantum system, the notion of thermalization is understood in the form of the \emph{Eigenstate Thermalization hypothesis} (ETH)~\cite{1991_Deutsch_ETH_PRA, 1994_Srednicki_ETH_PRE}. ETH posits that for a quantum system, an eigenstate embodies an ensemble and thermalization can be diagnosed by monitoring if subsystems are thermal with respect to the rest of the system. Furthermore, if the system thermalizes, all eigenstates are thermal. Integrable systems violate ETH due to the existence of an extensive number of conserved quantities that prevent the system from acting as a bath for itself. However, quantum integrable models are highly fine tuned and one recovers thermalization by any infinitesimal deviation from the integrable point.
 
Recently, many-body localization (MBL) has emerged as a \emph{generic} class of interacting and disordered isolated systems which violate ETH. Basko et al.~\cite{Basko06} showed that all many-body eigenstates remain localized to all orders in perturbation for an effective interacting disordered model. Several numerical works subsequently verified that all many-body eigenstates are localized in one dimensional disordered lattice models with short-range interactions~\cite{oganesyan2007, pal2010, bauerNayak2013completeset, vadim, bardarson2014, huse2015many}. Furthermore, there has been a mathematical proof by Imbrie~\cite{imbrie2014many} for the existence of MBL in a particular disordered spin model with short range interactions. The absence of thermalization has been further quantified as a consequence of emergent integrability due to the presence of a complete set of local integrals of motion (LIOMs). The key distinction from fine tuned integrable models is that the LIOMs or `lbits' (for localized bits) in the MBL phase are robust against perturbations. One can use these lbits to construct a phenomenological lbit Hamiltonian that captures the entanglement dynamics~\cite{serbyn2013, chandran2015, huseNandkishoreOganesyan2014phenomenology, ros2015integrals}.  
	
Having established the existence of MBL and its violation of ETH in certain models, natural questions that arise are ``what is the most robust version of MBL?" and ``does it lead to a  refined notion of ETH?" To this end, it is worthwhile to consider instabilities to the MBL phase that lead to delocalization and thereby the restoration of thermalization. Recent works have considered instabilities to the MBL phase due to a small bath~\cite{johri2015many, hyatt2017many}, external drive~\cite{lazarides2015fate}, Griffiths effects and dimensionality~\cite{de2016absence, de2017many}, topologically protected chiral edge~\cite{nandkishore2014marginal} and a single particle mobility edge~\cite{li2015many, modak2015many}. Contrary to the common wisdom that these instabilities would lead to the complete restoration of thermalization, preliminary numerical results have indicated that the lack of thermalization tends to survive in some form in all these cases. However, the fate of these exotic phases in the thermodynamic limit is still an open question. 
	
Potter and Vasseur~\cite{potter2016symmetry} have recently added another instability to this list. It was argued that the l-bit Hamiltonian `enriched' with non-Abelian symmetries is unstable to perturbations. The physical intuition of the delocalization and thermalization stems from the resonant splitting of extensive degeneracy in the many-body spectrum associated with the higher-dimensional irreducible representations (irreps) of the non-Abelian group. Any perturbation results in either resonant delocalization of the many-body spectrum and thermalization, a quantum critical glass (QCG)-like phase or a spin-glass (SG) phase by spontaneous symmetry breaking (SSB) to an Abelian subgroup.

Thus symmetry provides a platform to search for exotic violations of ETH beyond MBL in strongly interacting systems. To this end, in this paper, we develop a procedure to construct general Hamiltonians with global symmetries and analyze the thermalization and localization indicators for individual eigenstates. As a particular example, we construct a two-parameter Hamiltonian with on-site $S_3$ symmetry. Using numerical exact diagonalization of our model Hamiltonian, we calculate cut-averaged entanglement entropy (CAEE) distributions and level statistics which are indicators of localization and a spin-glass diagnostic which detects symmetry breaking. Within the accuracy of our numerical analysis, we are able to distinctly observe both a thermal phase and an MBL spin-glass (SG) phase with spontaneous symmetry breaking of $S_3$ to $\mathbb{Z}_3$ symmetry.

{  We also employ the same diagnostics to quantify an intermediate region between the aforementioned two phases where the full $S_3$ symmetry is intact. However, we cannot ascertain the fate of this region in thermodynamic limit, due to the possibility of quantum critical cone like finite-size effects~\cite{husekhemani2016critical}.} The paper is organized as follows. In Sec.~\ref{sec:model} we construct a general $S_3$ symmetric Hamiltonian using group theory methods. We numerically diagonalize our model and compute indicators of localization and symmetry breaking in ~\ref{sec:diagnostics}. We end the paper with discussion of results and conclusion. We provide a review of the conjecture by Potter and Vasseur~\cite{potter2016symmetry} on the incompatibility of MBL with non-Abelian symmetries and other details of our analysis in the Appendices.
 
\section{Model $S_3$ invariant Hamiltonian}
\label{sec:model}
	In this section, we analyze a specific spin-1 Hamiltonian that is invariant under the smallest non-Abelian group, $S_3$. 
	In terms of the spin angular momentum basis $|S=3,S_z=+1,0,-1\rangle$, the spin operators are
	\begin{eqnarray}
			S^x &=&~ \frac{1}{\sqrt{2}} \begin{pmatrix}
			0 & 1 & 0 \\
			1 & 0 & 1 \\
			0 & 1 & 0
			\end{pmatrix},~
			S^z = \begin{pmatrix}
			1 & 0 & 0 \\
			0 & 0 & 0 \\
			0 & 0 & -1
			\end{pmatrix}, \nonumber \\			
			S^y &=& \frac{1}{\sqrt{2}} \begin{pmatrix}
			0 & -i & 0 \\
			i & 0 & -i \\
			0 & i & 0
			\end{pmatrix}, \
			S^\pm = \frac{1}{\sqrt{2}} (S^x \pm i S^y). \nonumber
			\end{eqnarray}
	The symmetry group $S_3$ contains six elements: $S_3 = \{1,a,a^2,x,x a,x a^2\}$, and the two generators, $a$ and $x$, satisfy the properties $a^3=x^2=1$ and $x a x = a^{-1}$. Note that in this paper, we refer to the identity element of the group simply as $1$. 
	In the spin basis, they are chosen to have the following representations,
	\begin{eqnarray}
	V(a) = \begin{pmatrix}
	\omega & 0 & 0\\
	0 & 1 & 0\\
	0 & 0 & \omega^*
	\end{pmatrix},~
	V(x) = \begin{pmatrix}
	0 & 0 & 1\\
	0 & -1 & 0\\
	1 & 0 & 0
	\end{pmatrix},
	\end{eqnarray}
	where $\omega = e^{2 \pi i/3}$. It can be verified that the spin operators transform under the generators as follows,
		\begin{eqnarray}
	V(a)\,S^\pm\, V(a)^\dagger&=&\omega ^{\pm 1} S^\pm, \ V(a)\,S^z\,V(a)^\dagger=S^z\\
	V(x)\,S^\pm \,V(x)^\dagger&=&- S^\mp, \ V(x)\,S^z\,V(x)^\dagger=-S^z.
	\end{eqnarray}
	Using the symmetry arguments detailed in Appendix~\ref{app:building symmetric ham}, we construct the following Hamiltonian: 
	\begin{eqnarray}\label{eq:S3 Hamiltonian}
	H(\lambda, \kappa) &=& \lambda H_d(\kappa) +  H_t,\\
	H_d(\kappa) &=&  \sum_{i=1}^L (1-\kappa)~ h_i~ (S^z_i)^2 + \kappa~J_i~S^z_i S^z_{i+1}, \nonumber\\
	H_t &=& \Delta_t \left[H_a + H_b + H_c \right],\nonumber\\
	H_a &=& a\sum_{i=1}^{L}(S^+_i)^2 (S^-_{i+1})^2 + (S^-_i)^2 (S^+_{i+1})^2 +h.c,\nonumber\\
	H_b &=& b\sum_{i=1}^{L}(S^+_iS^z_i)(S^-_{i+1}S^z_{i+1}) + (S^-_{i}S^z_{i}) (S^+_{i+1}S^z_{i+1}) +h.c , \nonumber\\
	H_c &=& c\sum_{i=1}^{L}(S^+_i)^2 (S^+_{i+1}S^z_{i+1}) + (S^-_i)^2 (S^-_{i+1}S^z_{i+1}) +h.c. \nonumber
	\end{eqnarray}
	
	The above Hamiltonian consists of two parts: 1) The disordered part $H_d$ with a one body (disordered $h_i$ term) and a two-body l-bit term (disordered $J_i$ term). The two-body term is designed to drive spontaneous symmetry breaking of non-Abelian $S_3$ symmetry down to an Abelian $\mathbb{Z}_3$ symmetry. The relative strengths of these two terms are controlled by the $\kappa \in [0,1]$ parameter, where $\kappa=1$ is expected to be the SSB limit. 2) The second term $H_t$, the thermalizing term, contains a representative subset of the most general two-body symmetric operators. The intention is to keep $H_t$ sufficiently generic while retaining invariance under symmetry action (see Appendix~\ref{app:building symmetric ham} for details of how this Hamiltonian is constructed and can be generalized to arbitrary symmetry groups). The $\lambda$ parameter controls disorder strength. 

Using  the transformation of the spin operators listed above, it is straightforward to verify that the Hamiltonian has the desired symmetry, that is, $\forall g\in S_3$
	\begin{align}
	U(g) H(\lambda,\kappa) U(g)^\dagger = H(\lambda,\kappa),
	\quad U(g) = \bigotimes_{i=1}^L V(g)_i .
	\end{align}
For numerical analysis, the parameters in the Hamiltonian are selected as follows,
	(a) $h_i = w_h~g_h(i)$, $J_i=w_J~g_J(i)$ and $g_{h/J}(i)$ are random numbers drawn from a normal distribution with mean 0 and standard deviation 1;
	(b) $(w_h,w_J,\Delta_t,a,b,c)$ are free real parameters. We arbitrarily fix these to the values $(1.0,0.6,0.17,0.74,0.67,0.85)$ respectively for our numerical study without loss of generality.

%
 	
	\section{Numerical Results}~\label{sec:diagnostics}
	
	We perform exact diagonalization of the Hamiltonian in Eq.~\ref{eq:S3 Hamiltonian}. The local Hilbert space for the $S_3$ symmetric Hamiltonian is three dimensional, in contrast to the spin $\frac{1}{2}$ case and this constraints our maximum system size accessible to be $L=10$ sites. 	We study the properties of the eigenstates pertaining to localization and thermalization of this Hamiltonian for various values of $\lambda \in (0,\infty)$ and $ \kappa \in [0,1]$. For clarity of presentation of certain analysis, we use the rescaled variable $\frac{\lambda}{1+\lambda} \in (0,1)$ instead of $\lambda$ (wherever mentioned).  We employ periodic boundary conditions for all our analysis. To characterize the phases, we study below several relevant diagnostics that quantify the nature of localization and thermalization of the eigenstates. First, we consider the full entanglement distributions evaluated using the cut averaged entanglement entropy. 

	\subsection{Cut averaged entanglement entropy distributions}
	
	Since MBL is a characteristic of a single eigenstate, it is useful to quantify this phase  without averaging across different eigenstates. Recent work by Yu-Clark and Pekker~\cite{clark2016cutavg} has proposed cut averaged entanglement entropy (CAEE) to quantify the MBL phase at the level of a single eigenstate. CAEE is obtained by averaging the entanglement entropy for a given subsystem size across all possible locations of the subsystem on the spin chain. The CAEE scaling is then evaluated by repeating this procedure for different subsystem sizes. The key advantage of the CAEE is that strong sub-additivity condition constrains the shape of the entanglement scaling as a function of subsystem size, \thatis, CAEE is guaranteed to be a smooth convex function of the subsystem size without any average over disorder or eigenstates~\cite{clark2016cutavg}. This allows us to quantify the entanglement scaling of each eigenstate using the slope of the CAEE (SCAEE), as in Ref~\cite{clark2016cutavg}. We can then construct the full distribution of the slopes across the disorder snapshots and eigenstates. Fig.~\ref{fig:cut averaged plots} shows sample CAEE, along with a spline fit for 200 randomly chosen eigenstates for 4 different $\{\lambda, \kappa\}$ from a few disorder realizations of a 10-site Hamiltonian~[\ref{eq:S3 Hamiltonian}]. 
	\begin{figure}[!htb]
		\centering
		\begin{tabular}{cc}			
			\includegraphics[width=45mm]{./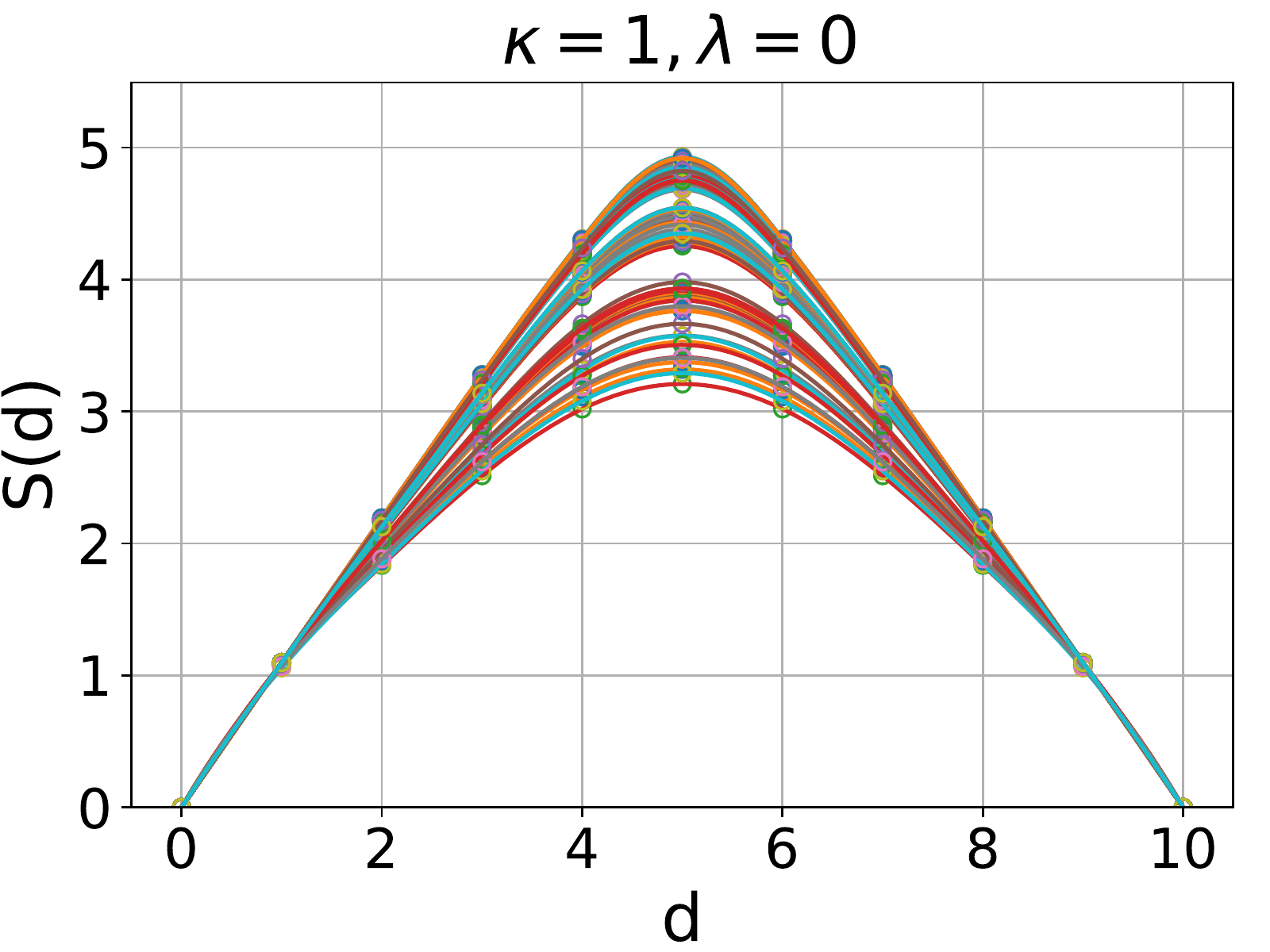}&			
			\includegraphics[width=45mm]{./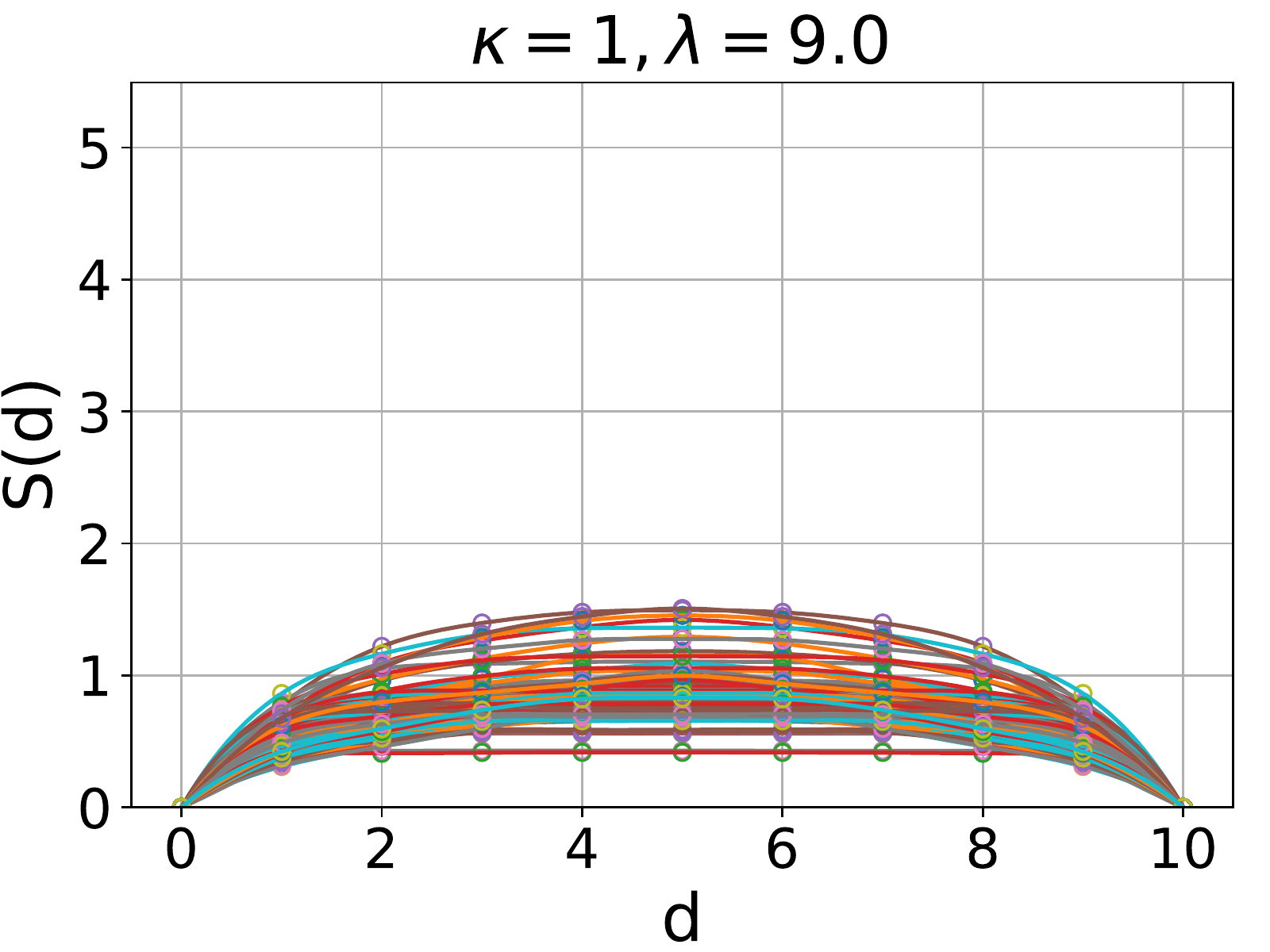}\\			
			\includegraphics[width=45mm]{./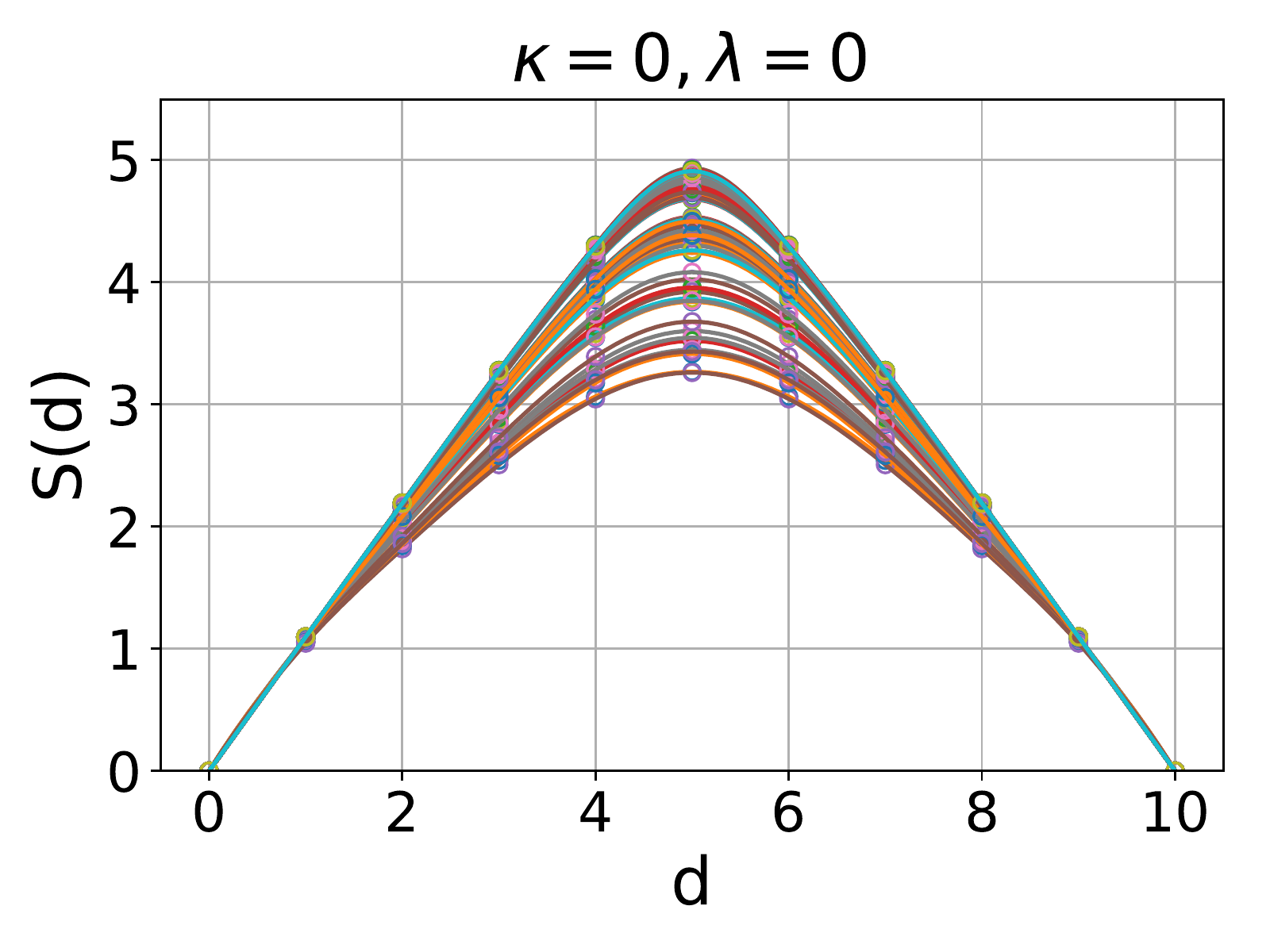}&			
			\includegraphics[width=45mm]{./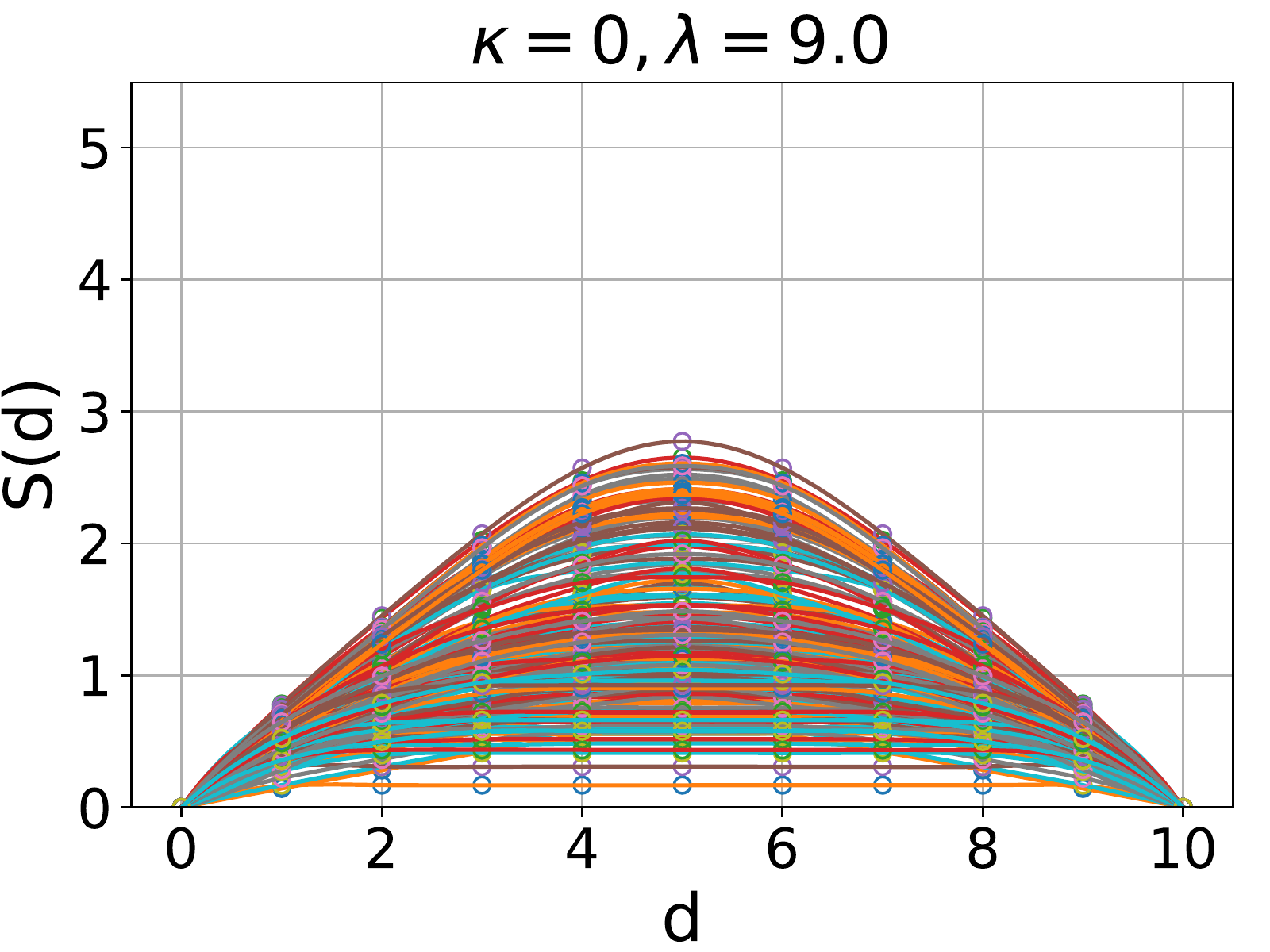}						
		\end{tabular}		
		\caption{(Color online) CAEE and spline fit for 200 eigenstates randomly sampled from the spectra of 19 disorder realizations of the 10 site Hamiltonian~(\ref{eq:S3 Hamiltonian}). \label{fig:cut averaged plots}}			
	\end{figure}
It can be seen that the eigenstates for small $\lambda$ are mostly volume law, while for large $\lambda$, there exists mixture of area-law and volume-law states. In order to identify the nature of the eigenstate transitions for various parameter regimes, we monitor the full distribution of the slope of the CAEE (SCAEE) evaluated at subsystem size $L/4$ for different values of $\lambda$ and $\kappa$ and disorder realizations. We compute SCAEE as follows: for each eigenstate, we compute the CAEE, $S(d)$ for different subsystem sizes, fit the data to a curve using a spline fit and then evaluate the slope, $S'(d)$ for this fit curve at $d=L/4$. 

There is however a potential issue because of the non-Abelian nature of the $S_3$ symmetry of the Hamiltonian. Generally, at finite-sizes, the eigenstates of a Hamiltonian invariant under the action of a symmetry group $G$ transform as irreps of the same group. For our case, $S_3$ has 3 irreps- two 1D irreps ($\mathbf{1}$, $\mathbf{1'}$) and one 2D irrep ($\mathbf{2}$). Eigenstates that transform as the $\mathbf{2}$ are two-fold degenerate. We may get different entanglement scaling depending on which precise orthonormal states in this 2D vector space is produced as the eigenstates~\footnote{We thank Bela Bauer for pointing this out}. To avoid this issue, we diagonalize the Hamiltonian in the 1D irrep sector. In addition to ensuring that we only sample non-degenerate eigenstates which transform as 1D irreps, this also helps us in reaching higher system sizes (See Appendix~[\ref{app:detecting irrep}] for more details). 

	\begin{figure}[!htb]
	 	 	\includegraphics[width=85mm]{./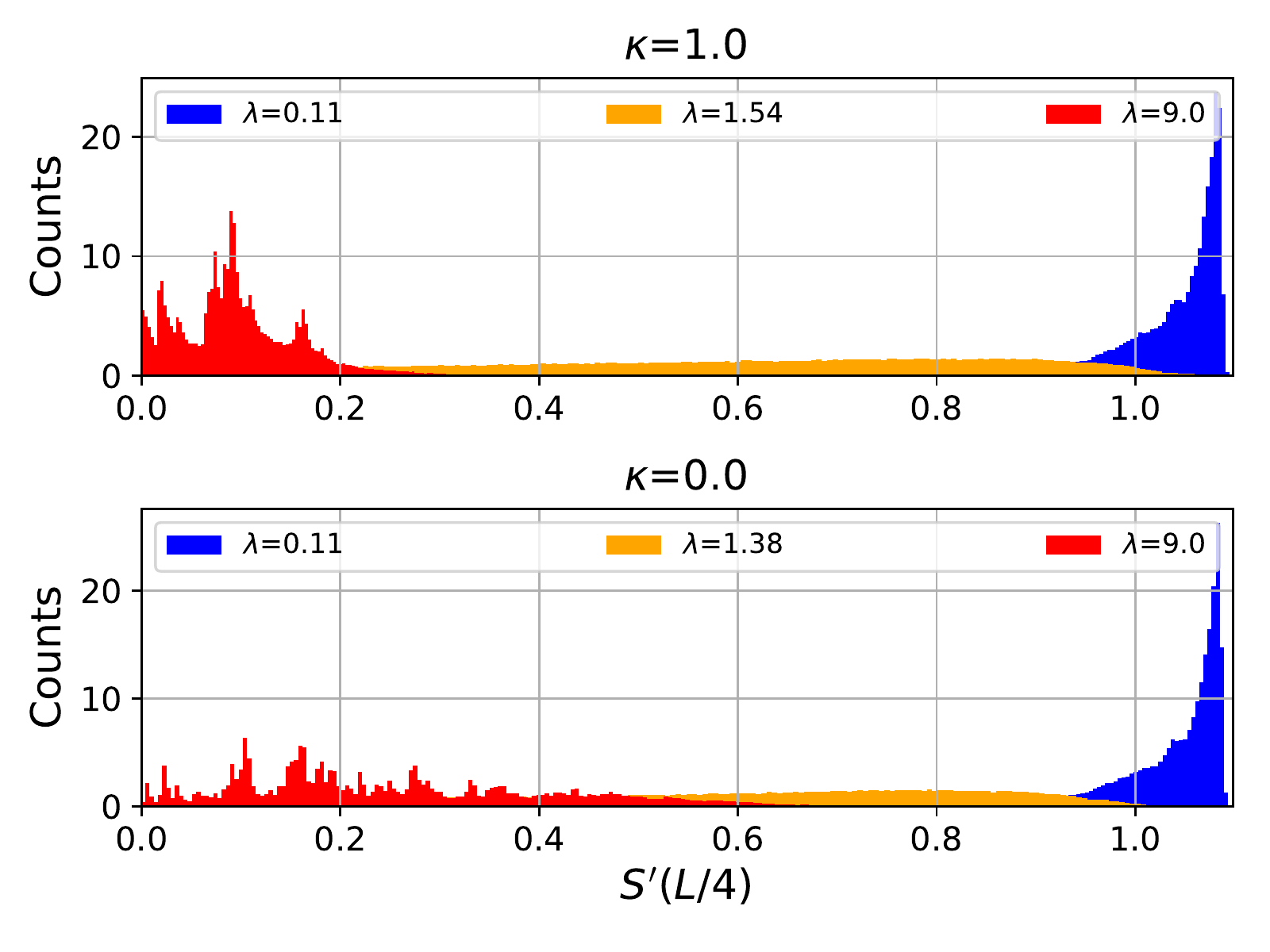}
	 	 	\caption{ (Color online) Slope histograms for 9 sites and 879 disorder samples for representative $\{\lambda,\kappa\}$. $243$ Eigenstates that transform as 1D irreps chosen sampled for each disorder realization. The plot is normalized to have unit area. \label{fig:slope histograms}}
	 	 \end{figure}	 	 

Fig.~\ref{fig:slope histograms} shows the distribution of $S'(L/4)$ for different $\lambda$ and $\kappa$. The values of $\lambda$ are chosen so as to show what the distribution looks like when the system has strong, weak and intermediate disorder strength. For all values of the SSB parameter $\kappa$, for weak disorder $\lambda$, we see that the $S'(L/4)$ distribution becomes increasingly narrow with system size with a peak located close to $1.1 \approx log(3)$ which is the maximum value possible for $S'(d)$ for any state of spin-1 chain. This is also evident from the large number of eigenstates with volume-law entanglement entropy scaling $S(d) \propto d$  in Fig.~\ref{fig:cut averaged plots}. These properties are consistent with an ergodic/ thermal phase. For high disorder however, we see that the distribution for $\kappa=0$ in Fig.~\ref{fig:slope histograms} is different from $\kappa=1$. For the former, there is a relatively extended thermal tail which is suppressed for the latter. To gain a better understanding we present the first two moments (mean and variance) of this slope distribution which are indicators of a potential MBL transition. 

{\it $\kappa=1$ with $S_3\rightarrow \mathbb{Z}_3$ symmetry breaking:} The entanglement distribution for the $\kappa=1$ limit Fig.~\ref{fig:slope histograms} and its moments, displayed in the upper panels of Figs.~\ref{fig:slope mean cuts},\ref{fig:slope var cuts} are consistent with the existence of an MBL phase for the large disorder limit and transitions to a fully thermal phase for weak disorder. One way to estimate the transition point is by locating where the mean, $\overline{S'(L/4)}$ curves for different system sizes cross on the $\lambda$ axis in Fig.~\ref{fig:slope mean cuts}. This is roughly at $\lambda/(1+\lambda) \approx 0.72$. Another is by locating the peak of the variance, $\sigma^2(S'(L/4))$ curve on the $\lambda$ axis which is believed to be close to the point of phase transition in the thermodynamic limit~\cite{husekhemani2016critical,husezhang2016floquet}. The drift of this point towards the disorder side \thatis larger $\lambda$ with increase in system size is considered to be typical for exact diagonalization (ED) studies of MBL~\cite{husekhemani2016critical}. Since our model has a non-Abelian symmetry, the existence of a full MBL phase must accompany SSB to an Abelian subgroup. We confirm SSB ($S_3$ to $\mathbb{Z}_3$ in this case) by computing a spin-glass diagnostic in \ref{sec:ea}. 

{\it $\kappa=0$ with full $S_3$ symmetry:} The entanglement distribution for the $\kappa=0$ limit, shown in Fig.~\ref{fig:slope histograms},  and its moments, displayed in the lower panels of Figs.~\ref{fig:slope mean cuts} and \ref{fig:slope var cuts} shows an enhanced variance and mean at the $\kappa=0$ for the large disorder limit. The enhanced mean value is an indication of the presence of sub-thermal volume law states and area law states. However, the crossing of the $\overline{S'(L/4)}$ curves persists (roughly at $\lambda/(1+\lambda) \approx 0.70$) as does the peak in the $\sigma^2(S'(L/4))$ plot. How this peak value changes as we approach the thermodynamic limit is an open question and we hope that better tools of numerical analysis like matrix product state methods can shed some light on this issue. We leave this for future work.

\begin{figure}[!htb]
	\centering
	\begin{tabular}{c}			
		\includegraphics[width=65mm]{./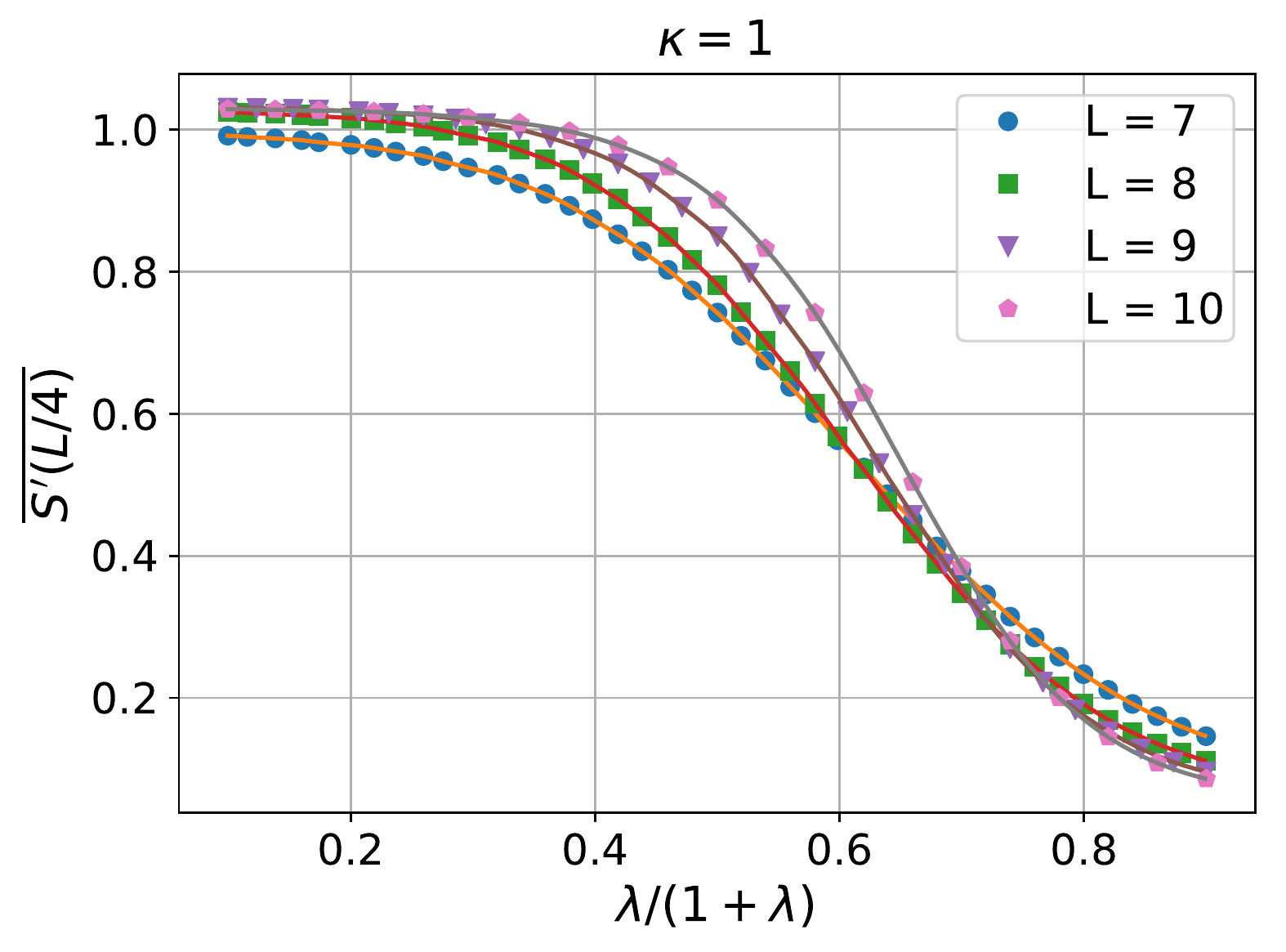}	\\
		\includegraphics[width=65mm]{./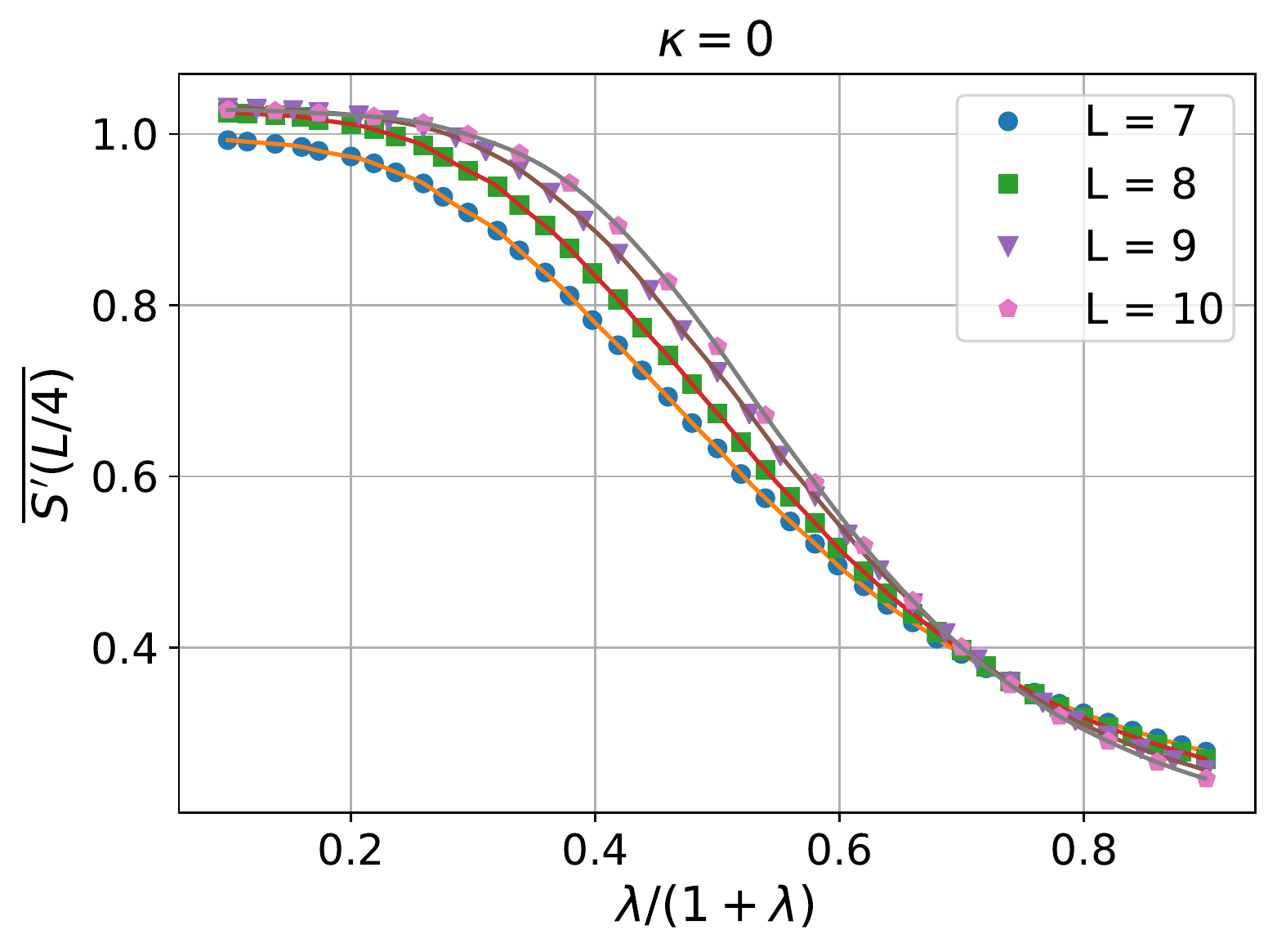}		 			
	\end{tabular}			 
	\caption{ (Color online) Mean of $S'(L/4)$ distribution (with spline fit) as a function of $\lambda/(1+\lambda)$ for $\kappa = 0$ and $\kappa = 1$. 243 eigenstates per disorder realization that transform as 1D irreps sampled for 800 (7,8 sites), 879 (9 sites) and  421 (10 sites) disorder realizations respectively .\label{fig:slope mean cuts}}
\end{figure}

	 \begin{figure}[!htb]
		\centering
		\begin{tabular}{c}			
	 	\includegraphics[width=65mm]{./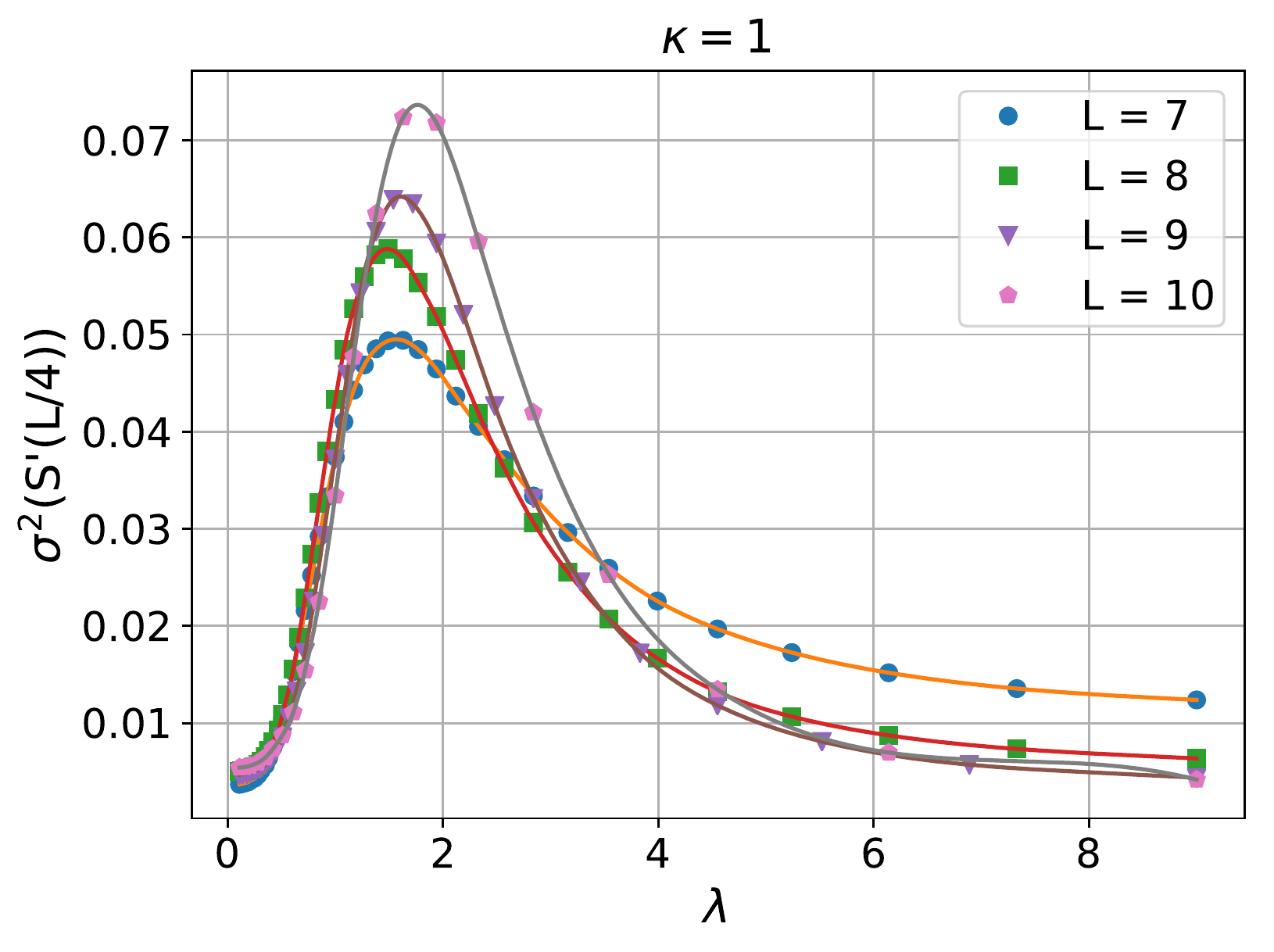}	\\
	 	\includegraphics[width=65mm]{./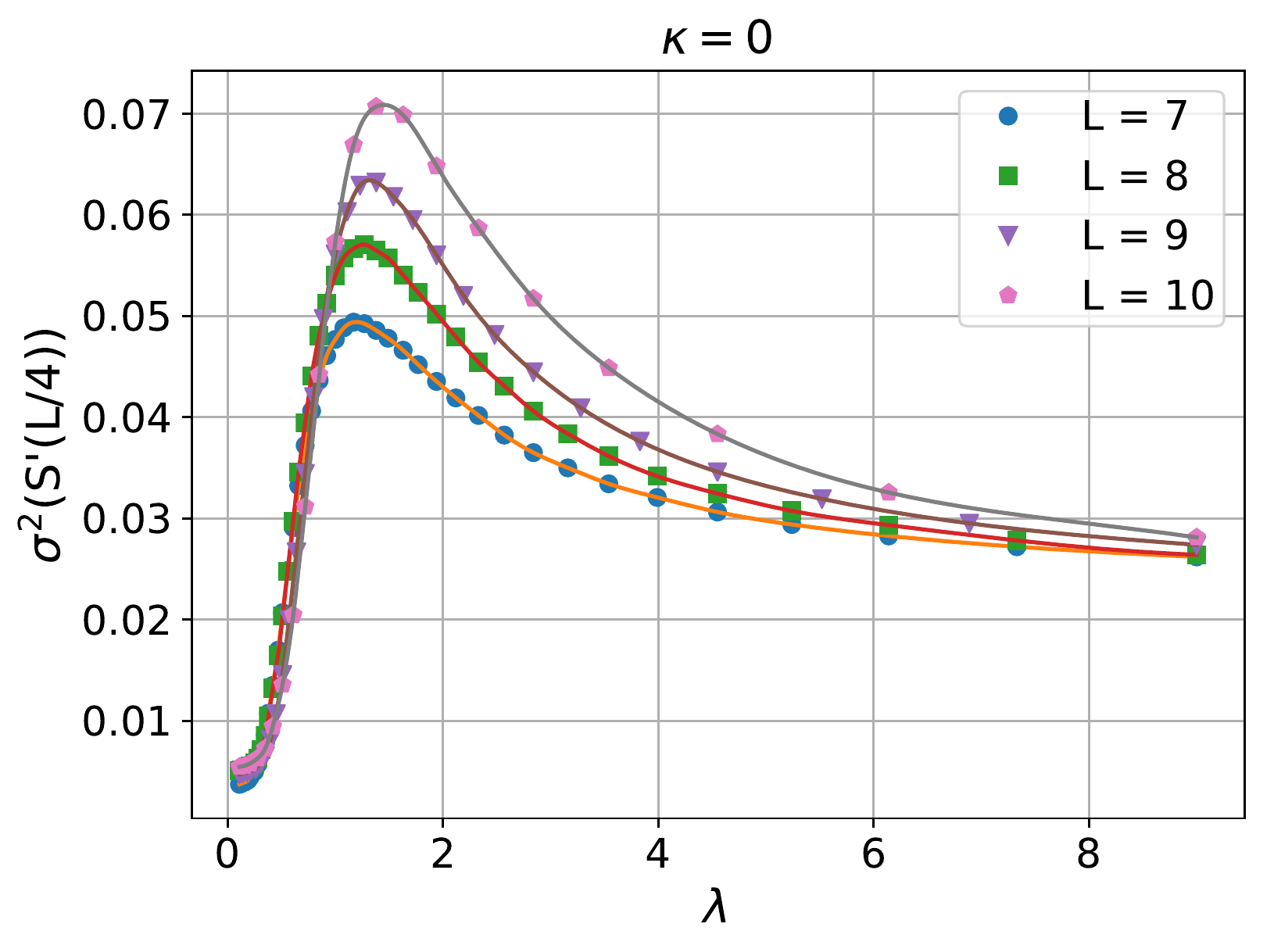}		 			
		\end{tabular}
		\caption{ (Color online) Variance of $S'(L/4)$ distribution (with spline fit) as a function of $\lambda$ for $\kappa = 0$ and $\kappa = 1$. 243 eigenstates per disorder realization that transform as 1D irreps sampled for 800 (7,8 sites), 879 (9 sites) and  421 (10 sites) disorder realizations respectively. \label{fig:slope var cuts}}		
	 \end{figure}	 
	 	 
\subsection{Spontaneous symmetry breaking in excited states} \label{sec:ea}
As indicated by the entanglement distributions, the full MBL phase appears only in the $\kappa=1$ limit. This is the limit where the disordered `l-bit' term is dominated by $\sum_i J_i~S^z_i S^z_{i+1}$ which triggers the spontaneous symmetry breaking (SSB). To confirm that SSB has indeed taken place for the many-body excited states, we use a spin-glass diagnostic which we describe below.  In the study of classical spin glasses (SG)~\cite{Edwards_Anderson,spinglass_binder}, one is interested in order parameters sensitive to the spin-glass phase characterized by long-range order in the presence of disorder. One such important quantity of study is the spin-glass susceptibility~\cite{spinglass_nishimori} 
	\begin{equation}
	\chi = \frac{1}{N} \sum_{i,j=1}^N \left[\langle s_i s_j \rangle^2\right],
	\end{equation}
where, $s_i$ are classical Ising variables, $\langle*\rangle$ indicates statistical averaging and $\left[*\right]$ indicates disorder averaging. In Ref~\onlinecite{bardarson2014}, the authors defined a similar quantum mechanical diagnostic to detect spin-glass order arising from SSB of $\ztwo \rightarrow~trivial~group$  in a $\ztwo$ invariant Ising- like disordered spin chain:
	\begin{equation} \label{eq:Pollmann_SG}
	\chi^{SG} = \frac{1}{L} \sum_{i,j = 1}^{L} |\innerproduct{\epsilon|\sigma^z_i \sigma^z_j}{\epsilon}|^2,
	\end{equation}
where, $\ket{\epsilon}$ is an energy eigenstate and $\sigma^z$ is the Pauli-Z operator. In their model, it was shown that the average $\overline{\chi^{SG}}$  scales with system size as $\overline{\chi^{SG}}\sim L$ in the SG phase and approaches a constant value set by normalization for the paramagnetic phase. Similar to eq.~\ref{eq:Pollmann_SG}, for our model, we define the following spin-glass diagnostic that looks for signatures of SG order arising from SSB of $S_3 \rightarrow \zthree$
	\begin{equation} \label{eq:SG_Z3}
		\ea=\frac{1}{L-1} \sum_{ i \neq j =1}^{L}|\innerproduct{\epsilon|S^z_i S^z_j}{\epsilon}|^2.
	\end{equation}
Note that we choose to exclude the $i=j$ term in the summation unlike Eq~[\ref{eq:Pollmann_SG}]. We look at the statistics of $\ea$ across randomly sampled eigenstates and disorder realizations. We find signatures for transition to an SG phase as we vary $\lambda$ similar to what was found in Ref~\onlinecite{bardarson2014}. Fig.~\ref{fig:ea mean cuts} shows $\overline{\ea}$	versus  $\lambda$ for $\kappa=0,1$. 
 \begin{figure}[!htb]
	\centering
 	\begin{tabular}{c}			
 		\includegraphics[width=65mm]{./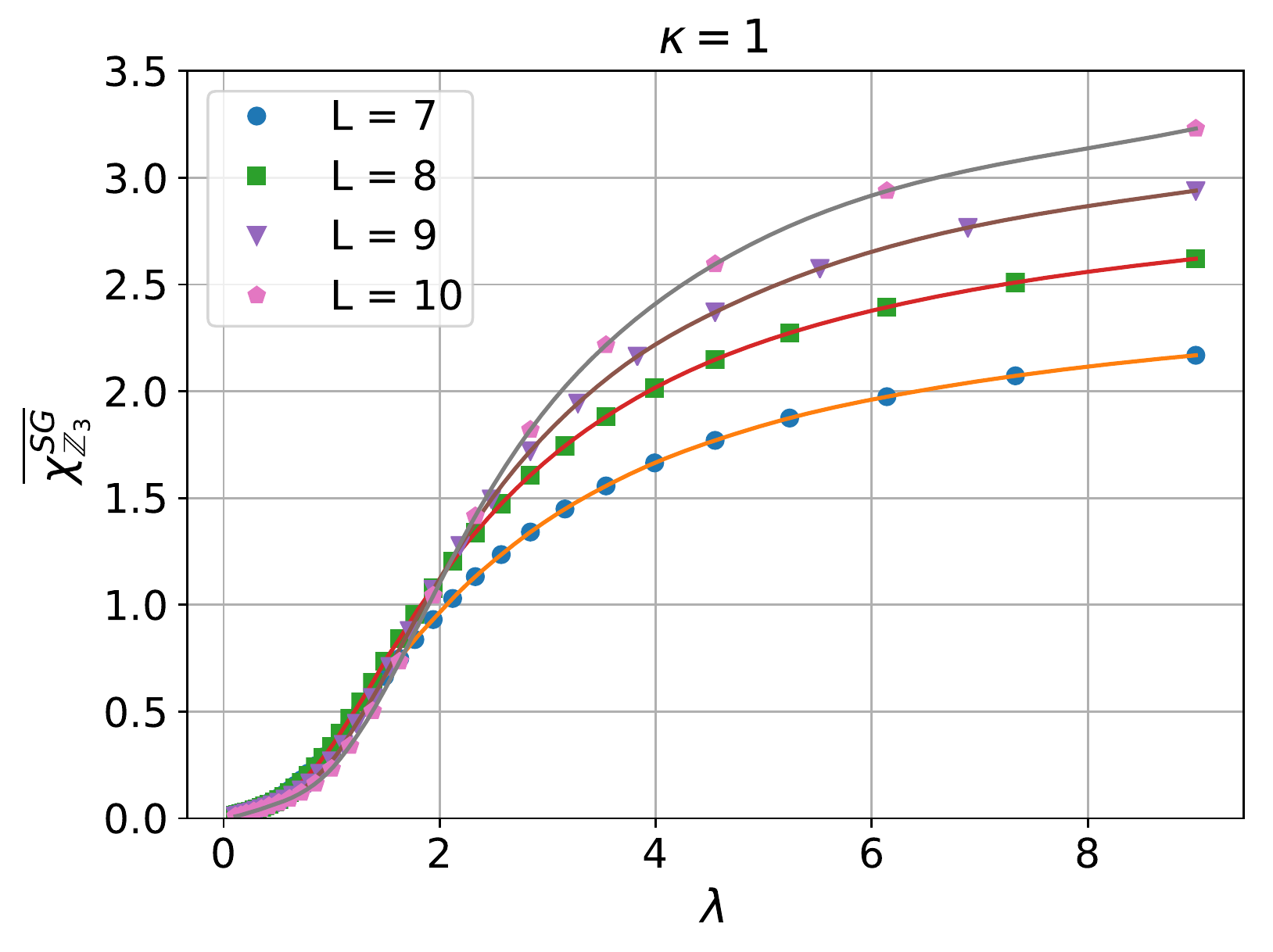}	\\
 		\includegraphics[width=65mm]{./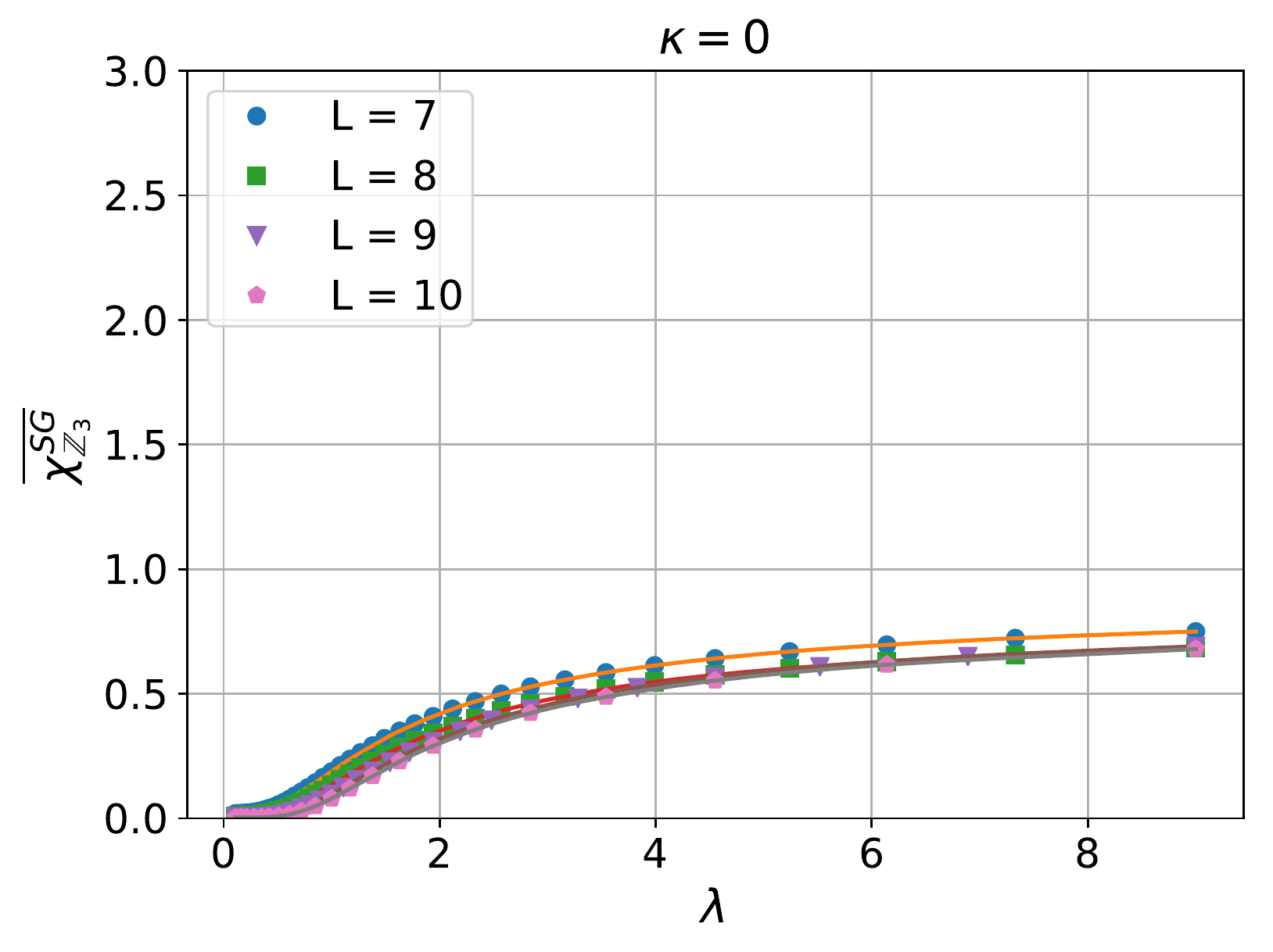}		 			
 	\end{tabular}
	\caption{ (Color online) $\overline{\ea}$ versus $\lambda$ (with spline fit) for $\kappa = 0$  and $\kappa= 1$. 243 eigenstates per disorder realization that transform as 1D irreps sampled for 800 (7,8 sites), 879 (9 sites) and  715 (10 sites) disorder realizations respectively. \label{fig:ea mean cuts}}	 					
 \end{figure}	 		
For $\kappa=1$, we indeed observe that the SG diagnostic increases with system size for large disorder. For $\kappa=0$, we see that the SG diagnostic at best saturates to a constant value independent of system size and at worst reduces with system size but certainly does not increase with it indicating the lack of SSB. To rule out SSB to other subgroups  in the $\kappa=0$ 	regime, we have to use other SG diagnostics that can detect spontaneous breaking of $S_3$  down to one of the other subgroups like $\ztwo$ and trivial group. In Appendix~\ref{app:SG}, we construct appropriate SG diagnostics and present numerical evidence that the full $S_3$ is indeed intact for $\kappa=0$. 

\subsection{Level statistics} \label{sec:levelstats}
{ Level statistics is a basis independent diagnostic that indicates localization and thermalization based on the statistics of the adjacent gap ratio ($\delta_n=E_{n+1}-E_{n}$) defined as,
\begin{align}
r_n=\min(\delta_n,\delta_{n+1})/	\max(\delta_n,\delta_{n+1}).
\end{align}

\begin{figure}[!htb]
	\centering
	\begin{tabular}{c}			
		\includegraphics[width=65mm]{./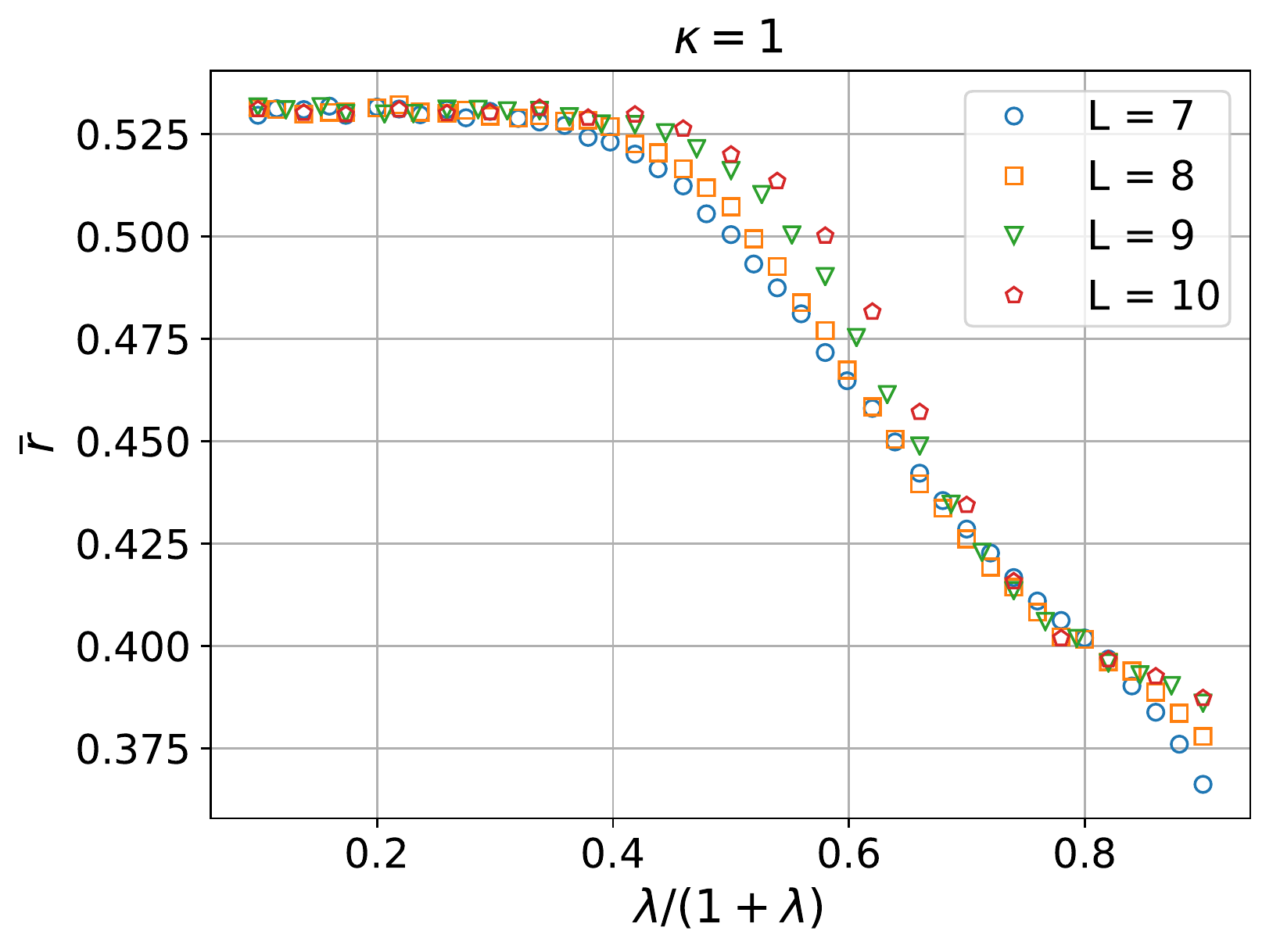}	\\
		\includegraphics[width=65mm]{./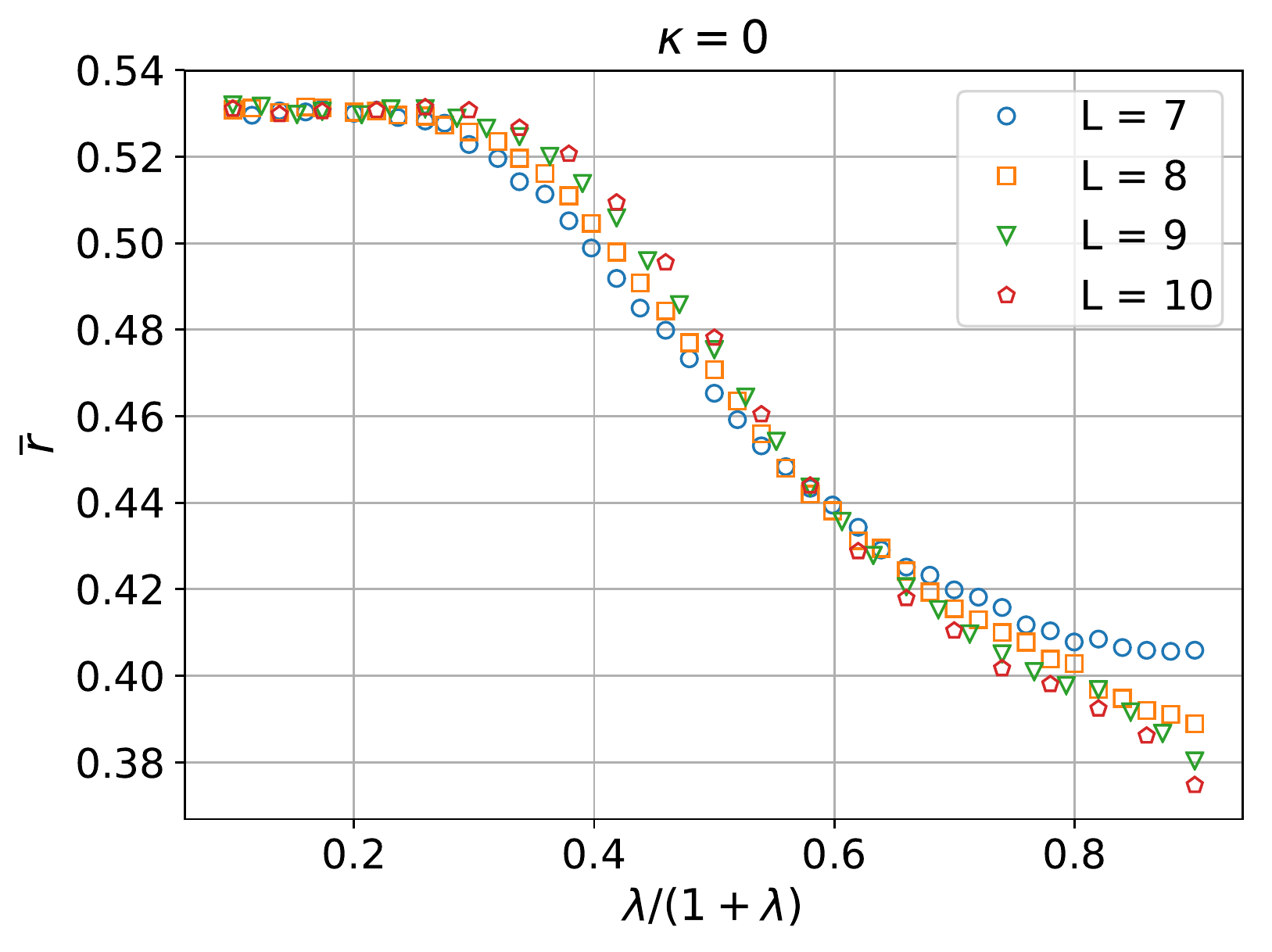}		 			
	\end{tabular}		
	\caption{ (Color online) $\overline{r}$ versus $\lambda /(1+\lambda)$ for $\kappa = 0$  and $\kappa= 1$ .  243 eigenstates per disorder realization that transform as 1D irreps sampled for 800 (7,8 sites), 879 (9 sites) and  715 (10 sites) disorder realizations respectively.  \label{fig:level mean cuts}}	 					
\end{figure}	

In the presence of symmetry, no interaction term in the Hamiltonian can mix eigenstates that transform as different irreducible representations of the symmetry group. Thus, it is meaningful to compute $r_n(\Gamma)$, the level statistics ratio for the eigenstates that transform as each irrep $\Gamma$ of the group separately. In this paper, in Fig.~\ref{fig:level mean cuts}, we present $\bar r = \frac{\overline{r}(\mathbf{1})+\overline{r}({\mathbf{1'}})}{2}$ as a function of $\lambda$ where $\overline{r}(\Gamma)$ is obtained by averaging $r_n(\Gamma)$ computed for randomly sampled eigenstates which transform as the $\Gamma$ irrep (see Appendix~\ref{app:detecting irrep} for details on how we detect the irrep of each eigenstate), across disorder realizations. On the side with low disorder, for both $\kappa=0$ and $\kappa=1$, we observe $\bar r\approx 0.53$. This is typical for a fully thermal phase indicating a Wigner-Dyson (WD) distribution. For a fully localized phase, typically, $\bar r\approx 0.39$, corresponding to Poisson distribution. However, the application of this diagnostic to the  degenerate spectrum of our $S_3$ invariant Hamiltonian can be tricky. In particular, closer to the l-bit point of large $\lambda$, the extensively degenerate spectrum may mimic level clustering and result in $\bar r< 0.39$ which we do indeed observe.  Hence, this diagnostic is only reliable at the thermal side, where we obtain the WD value of $\bar r\approx 0.53$. For the strong disorder, the level statistics approaches Poisson value but a clear reading is plagued by the degeneracies for both $\kappa=0,1$. }

\subsection{Finite size scaling}
In order to determine the location and nature of the putative transitions, we perform finite size scaling collapse for MBL and SG diagnosics using the following scaling ansatz used in Ref~\cite{bardarson2014}.
\begin{equation}
g(\lambda,L) = L^a f((\lambda-\lambda_c)L^{\frac{1}{\nu}})
\end{equation} 
Fig.~\ref{fig:scaling} shows the scaling collapse of the diagnostics $\overline{S'(L/4)},~\sigma^2(S'(L/4))$ and $ \overline{\ea}$ for 8, 9 and 10 site data. For $\kappa=1$, we obtain a good collapse for both the MBL and SG diagnostics at $\lambda_c \approx 2.65$ and $\nu \approx 2.5$. For $\kappa=0$, we do not get a scaling collapse for the SG diagnostic $\ea$ for non-zero $\lambda_c$ and positive $\nu$ which is consistent with the absence of SSB. On the other hand, we get a reasonable collapse for $\overline{S'(L/4)}$ and $\sigma^2(S'(L/4))$ at $\lambda_c  \approx 2.35$ and $\nu \approx 2.5$. 

Note that the value of the finite size exponent $\nu$ used in Fig.~\ref{fig:scaling} is consistent with the Harris~\cite{Harris}/CCFS~\cite{CCFS1,CCFS2}/CLO~\cite{CLO} criterion of $\nu \ge 2$ for one dimensional spin chains with quenched disorder. However, we emphasize that the our estimate is very rough and even for values of $\nu$ that violates the criterion (eg: $\nu \approx 1.5$ ), we still get a decent collapse. This violation is a common feature of ED studies of small system sizes~\cite{Louitz_nuviolation,bardarson2014}. Indeed, recent work by Khemani and Huse~\cite{khemaniTwoUniversality} suggests that this might be an indication of the system not exhibiting true thermodynamic behavior and is expected to undergo a crossover after which we obtain $\nu$ consistent with the Harris/CCFS/CLO bounds. Within the accuracy of our numerical investigation however, we can neither confirm nor rule out the possibility of our system being en-route to such a crossover. 

\begin{figure*}[!htb]
	\centering
	\begin{tabular}{ccc}			
		\includegraphics[width=60mm]{./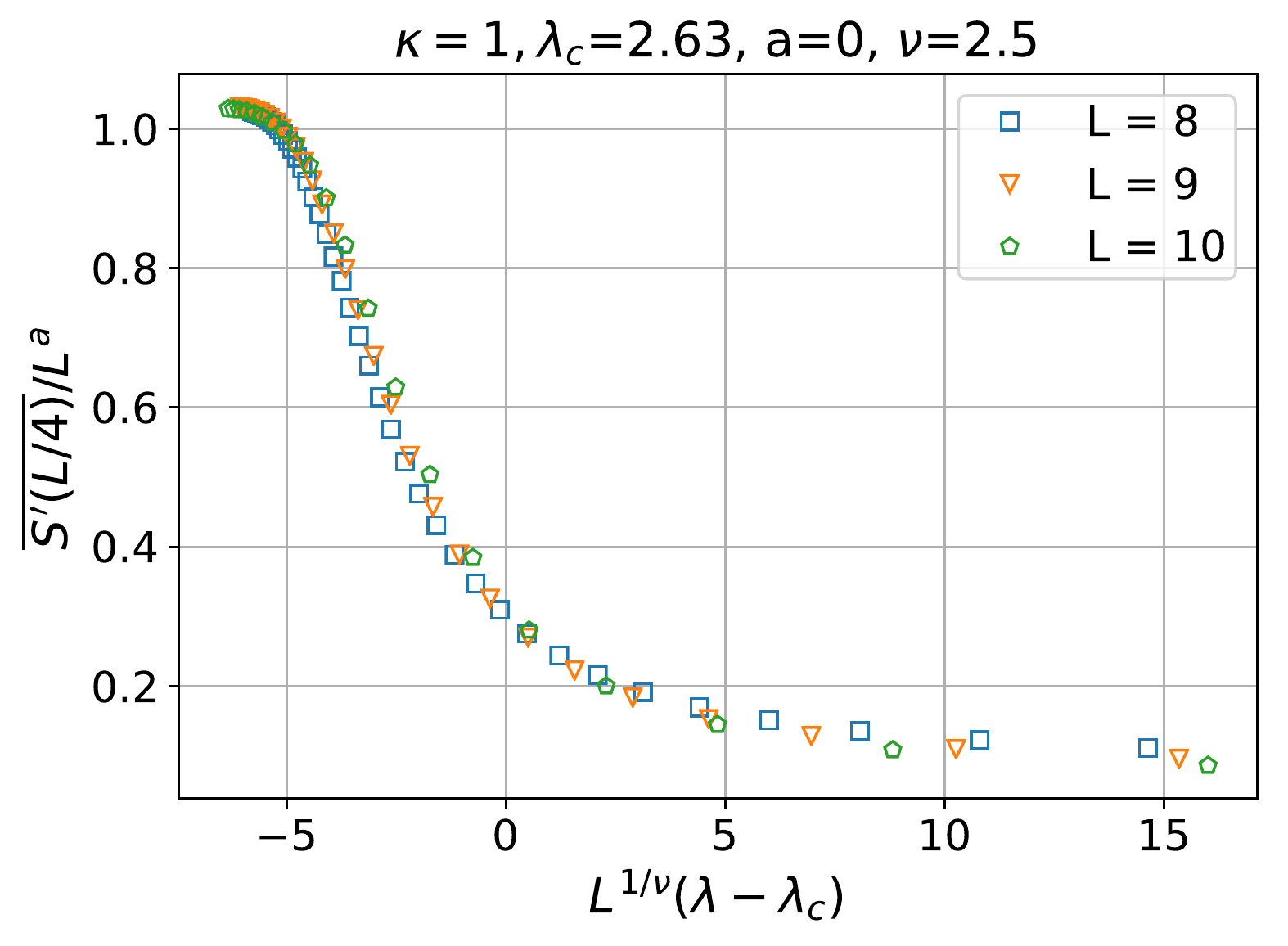}	&
		\includegraphics[width=60mm]{./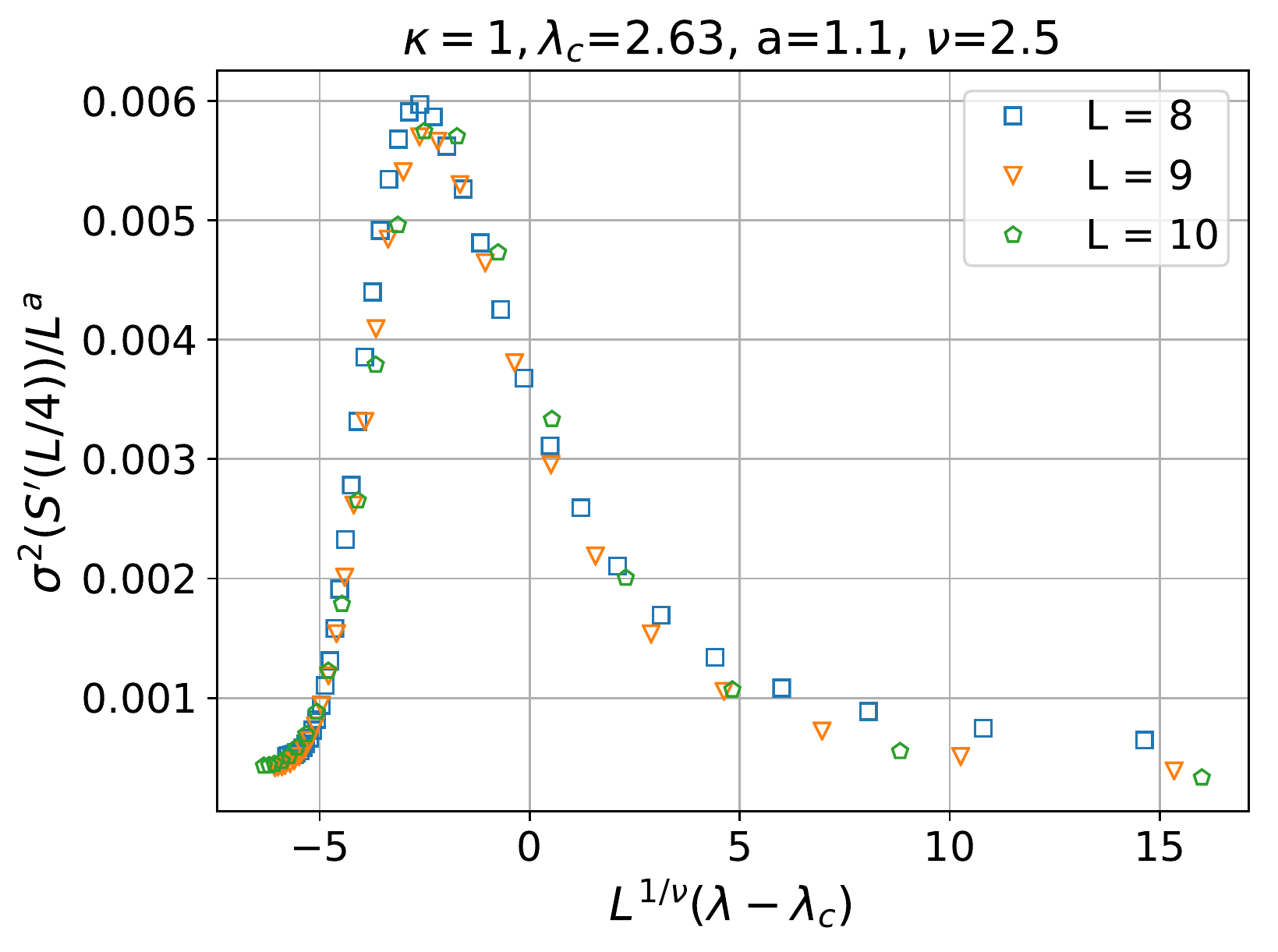}	&		
		\includegraphics[width=60mm]{./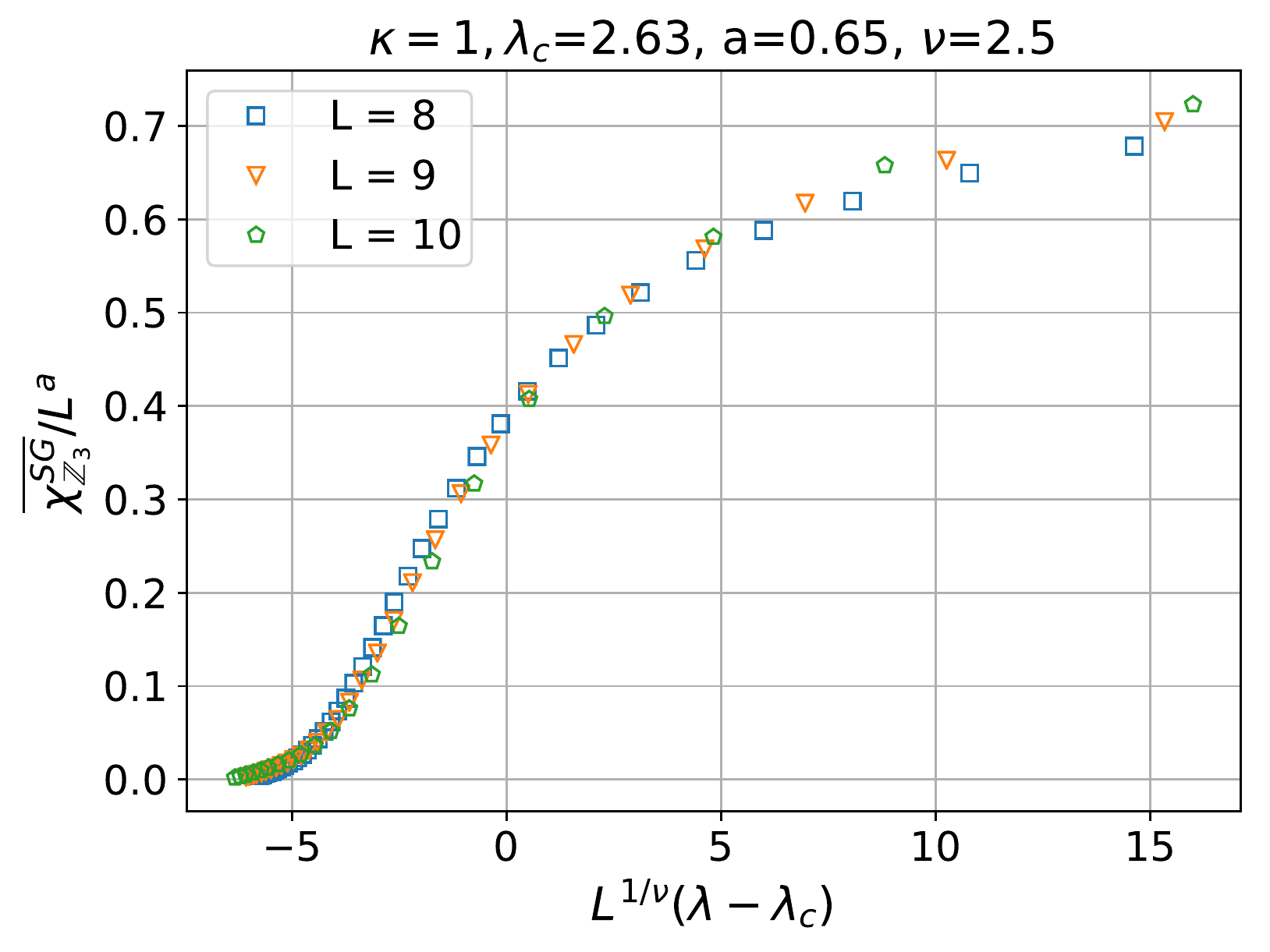}	\\	 						 					
		\includegraphics[width=60mm]{./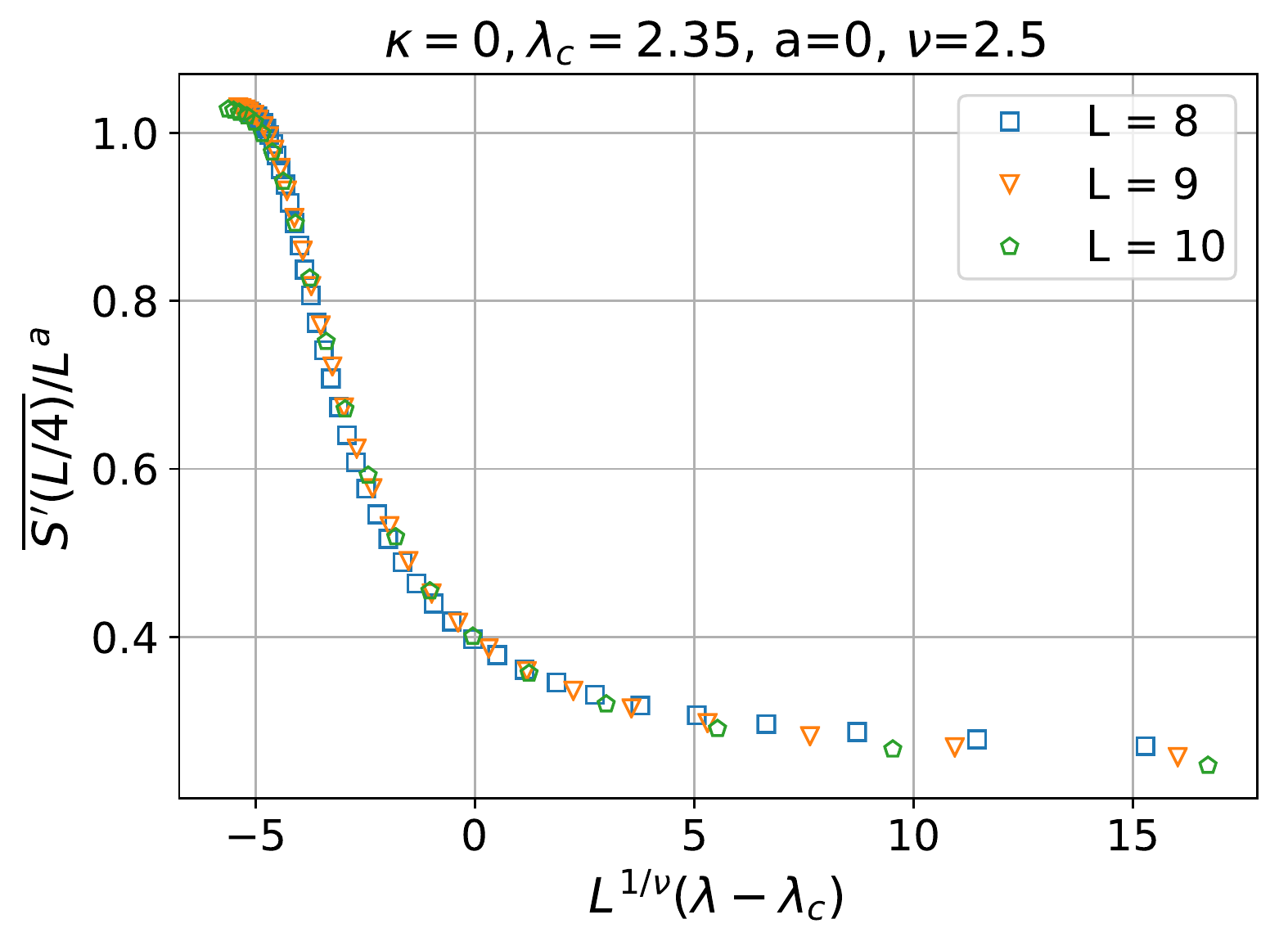}	&
		\includegraphics[width=60mm]{./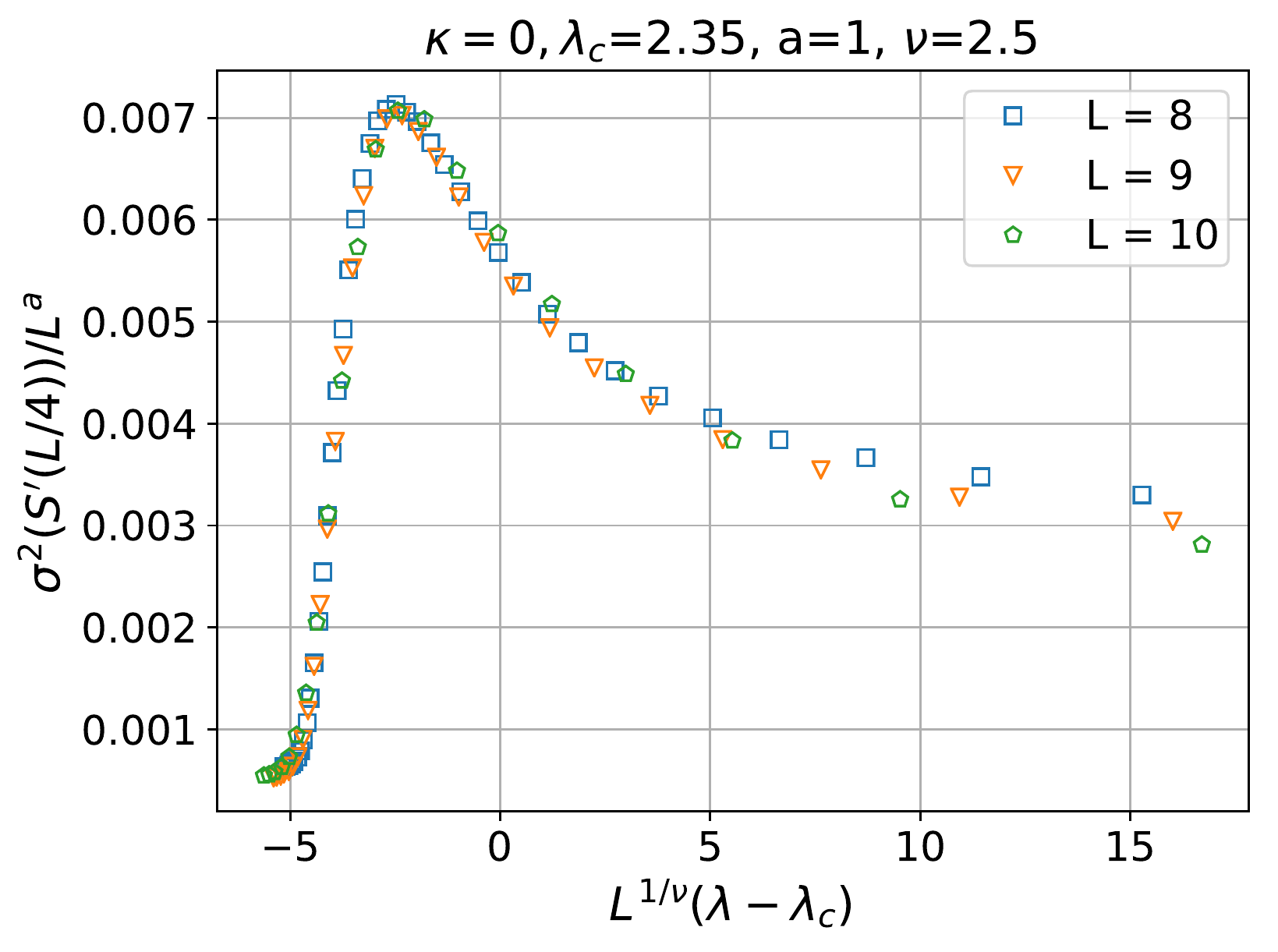}			
	\end{tabular}		
	\caption{(Color online) Scaling collapse of MBL and SG diagnostics for 8, 9 and 10 sites.}
	\label{fig:scaling}
\end{figure*}

\section{Discussion}
Based on the above analysis, we are in a position to put together an approximate phase diagram for the model Hamiltonian $H$. Fig.~\ref{fig:phase digs} shows a color map of $\overline{S'(L/4)}$, $\overline{\ea}$ and $\overline{r}$ plotted in the $\kappa, \lambda$ space. Fig.~\ref{fig:labeled phase diagram} shows a schematic plot that indicates a thermal phase for the weak disorder limit and a MBL+SSB phase for strong disorder limit. { In addition to these two phases, there is a hint of a third regime for strong disorder and no SSB, where there seems to be a coexistence of localized and delocalized states in the many-body spectrum. We now discuss each of this regimes separately.
\begin{figure*}[!htb]
	\centering
	\begin{tabular}{ccc}			
		\includegraphics[width=60mm]{./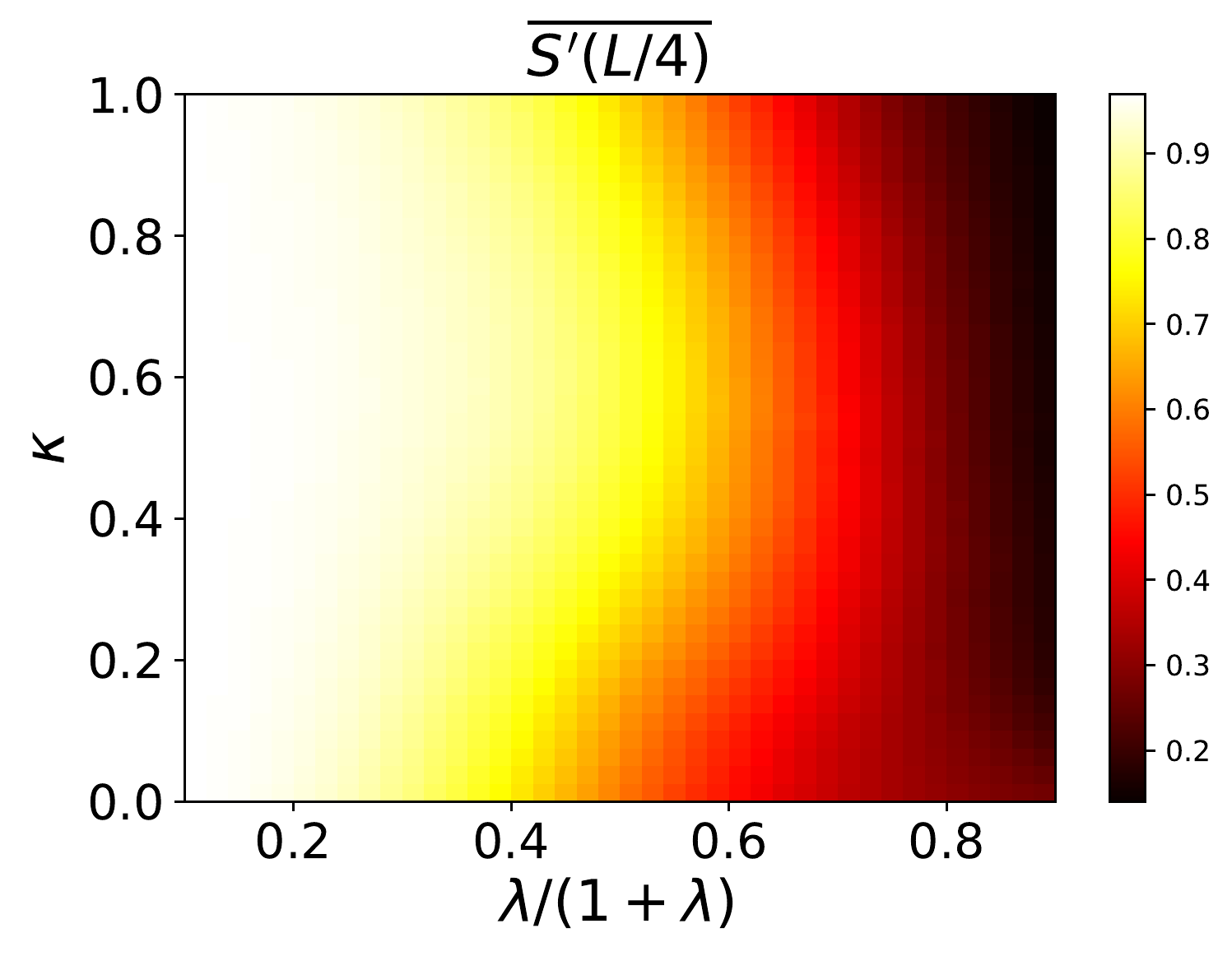}	&
		\includegraphics[width=60mm]{./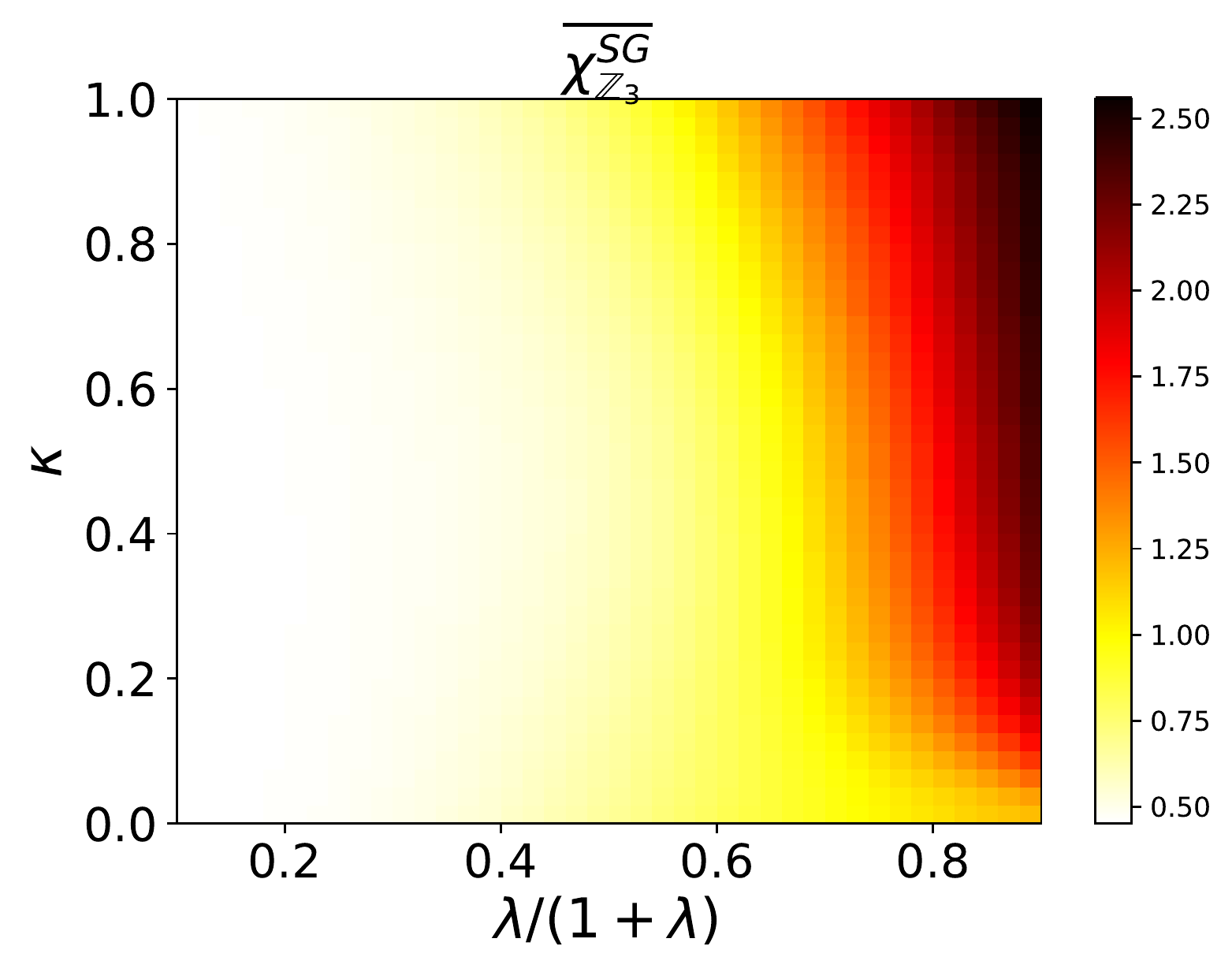}	&	 			
		\includegraphics[width=60mm]{./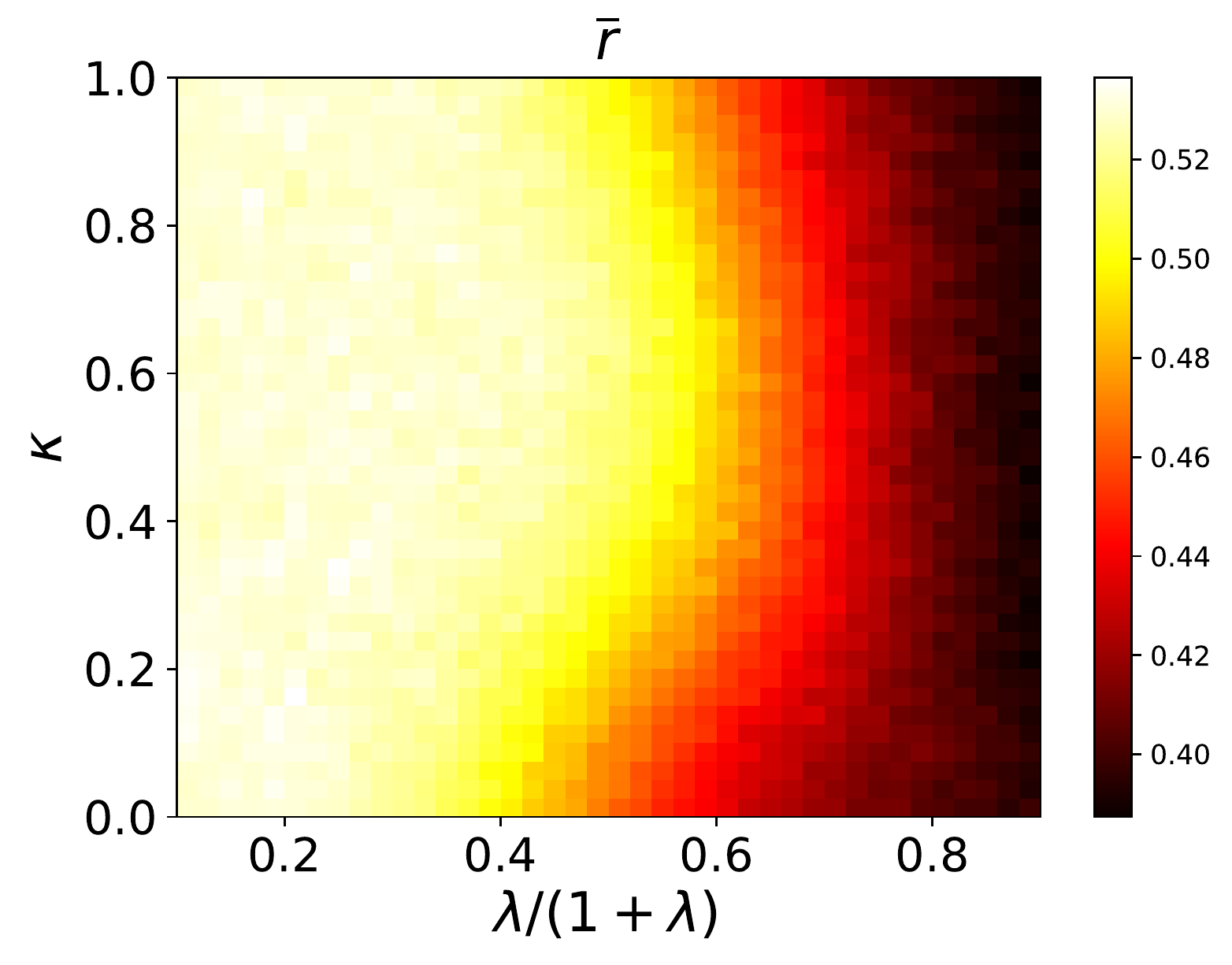}
	\end{tabular}		
	\caption{(Color online) Color map of $\overline{S'(L/4)}$, $\overline{\ea}$ and $\overline{r}$ for $\lambda/(1+\lambda) \in [0.1,0.9]$ and $\kappa \in [0,1]$.  200 eigenstates of the 7 site Hamiltonian~[\ref{eq:S3 Hamiltonian}] that transform as 1D irreps sampled randomly for 238 disorder realizations. \label{fig:phase digs}}
\end{figure*}
	\begin{figure}[!htb]
		\includegraphics[scale=0.35]{./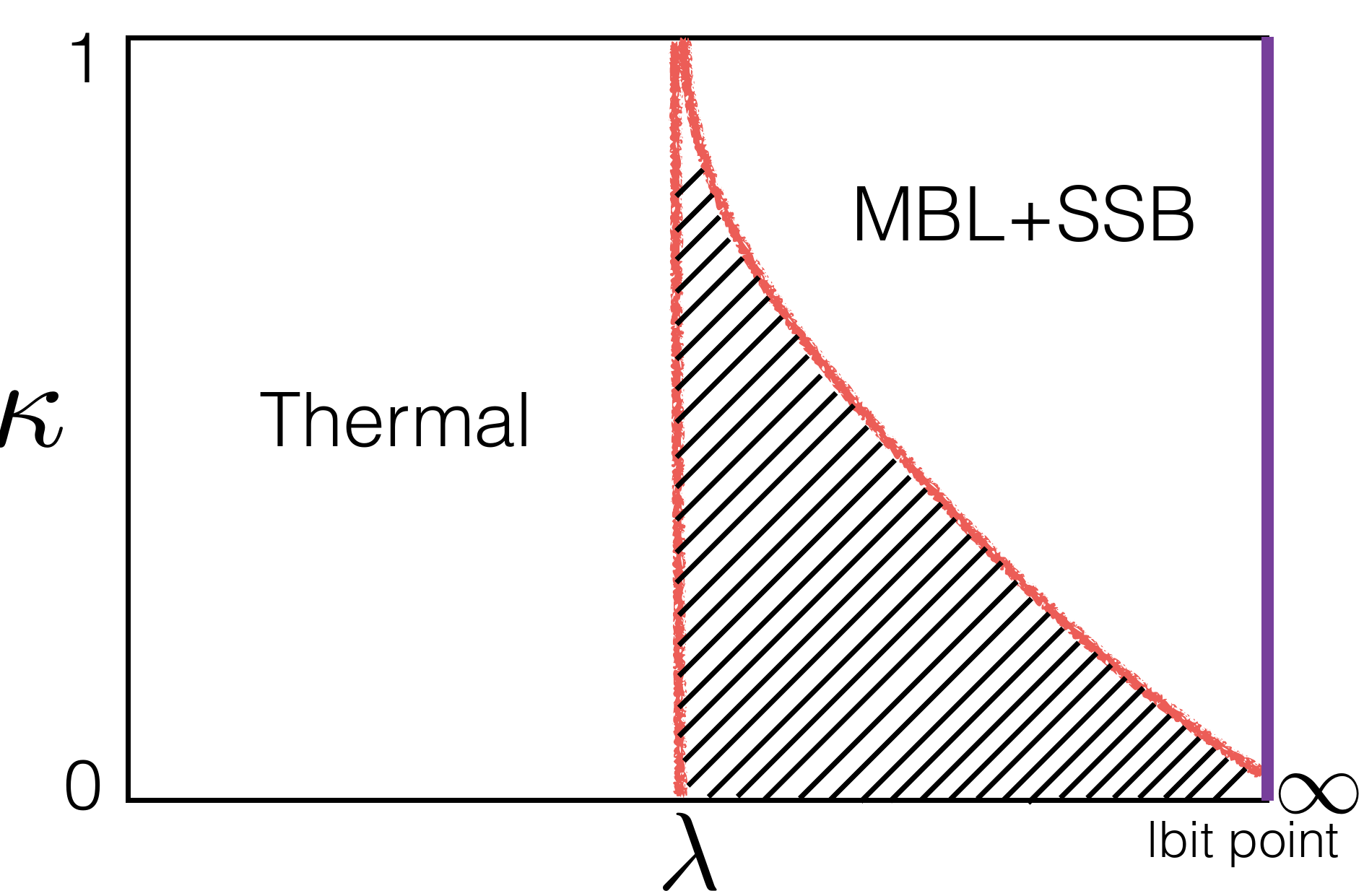}
		\caption{(Color online) The three regions labeled in a schematic plot. \label{fig:labeled phase diagram}}
	\end{figure}

{\it Thermal phase.} For the thermal phase, the distribution of slopes has a mean that saturates at the maximal entropy per unit site. This indicates a substantial presence of volume law scaling eigenstates. The distribution of $\ea$ has mean that does not increase with system size and implies absence of SSB. The level statistics clearly show Wigner-Dyson distribution highlighting the thermal nature of this regime.
	
{\it MBL+SSB phase.} For the MBL+ SSB phase, the distribution of slopes has mean close to 0. This indicates the presence of area-law scaling eigenstates. The distribution of $\ea$ parameter (designed to detect $S_3 \rightarrow \mathbb{Z}_3$ symmetry breaking) increases with system size. This indicates a substantial presence of eigenstates with SSB. The above results are consistent with a spin-glass phase with a residual Abelian symmetry group that supports a full MBL states with strong signatures of SSB at all eigenstates. The level statistics data is more noisy due to the level clustering as a result of degeneracies in this regime but hovers around the Poisson value.  
		
{\it Intermediate phase.} This case is shown as dashed region in the schematic Fig~\ref{fig:labeled phase diagram}.  The full $S_3$ symmetry is intact in this regime and is incompatible with the full MBL phase. At the l-bit point ($\lambda \rightarrow \infty$), the many-body spectrum has extensive number of states with extensive degeneracy. Any infinitesimal thermal perturbation is expected to split this extensive degeneracy down to the minimal values set by the size of the irreps of the group (2 and 1 in our case). This is expected to result in resonant delocalization of an extensive number of states. Thus the most plausible scenario is a fully thermal phase. Our numerical results, however, are not fully consistent with a thermal phase, but do not have the strong signatures of SG or MBL phases, either. A possibility for this regime is a marginal MBL phase that may be separating another MBL+SSB phase that is not explored within our parameter space. This can be uncovered introducing disorder to some of the relevant two-body thermal terms which are non-disordered in our current analysis. We leave this investigation for future work.}

{{Finite-size effects such as the critical-cone region~\cite{husekhemani2016critical} are important in this putative intermediate phase.  Indeed, such finite-size effects play an important role at large $\lambda$, and one has to be careful about the order of the thermodynamic and large $\lambda$ limits.  At finite-size $L$, large $\lambda$ corresponds to adding a small perturbation of $H_t$ on top of the lbit Hamiltonian, and for $1/\lambda$ much smaller than the lbit many-body level spacing $\sim 3^{-L}$ one can apply first-order perturbation theory.  One effect is to split the exponentially degenerate lbit states, resulting presumably in volume law eigenstates.  However, there are also states in the lbit spectrum which are not highly degenerate, corresponding to sectors where most of the sites sit in the one dimensional irrep, and only a few sit in the two dimensional irrep.  These are area law states, and could remain area law upon the addition of such a small perturbation.  Thus some vestiges of localization are expected to remain at large $\lambda$ for a finite-size $L$.  More numerical work at larger sizes will have to be done to distinguish these effects from a truly thermodynamic intermediate phase.}}

				 				 		
\section{Conclusion and outlook}
In this paper, we construct and study a spin-1 Hamiltonian invariant under an on-site $S_3$ symmetry using various disorder and symmetry diagnostics. We study the eigenstate phases that can arise via ergodicity and symmetry breaking. Within the accuracy of our numerical analysis, we can identify three regions in the two-parameter Hamiltonian space two of which are consistent with thermal, MBL- spin-glass phases and a third whose identity is not established with certainty in the present study. We state our observations about various characteristics of this region and speculate with regard to its identity. 

There are other interesting questions that are left for future study. In this work, we are limited to small system sizes by the tools of numerical analysis employed $i.e.$ exact diagonalization. There is much to be gained in designing and employing other numerical techniques to study larger system sizes. In this regard, it would be useful to explore the extension of tensor network techniques, which have been shown to be effective in the case of full-MBL, where eigenstates are expected to have area-law properties, to other settings, in particular to further study the disordered region with unbroken $S_3$ symmetry. Another extension of is to repeat our study in a Floquet setting where exotic possibilities have been conjectured in the presence of global symmetries~\cite{floquet_classification_shivaji,floquet_classification_vishwanath,floquet_classification_dominic,time_crystals_shivaji,time_crystals_dominic}. Here, the arguments for  instability of disordered Floquet systems with non-Abelian symmetries to thermalization would be stronger because of energy pumping. However, as we saw in the equilibrium case, such a system might have interesting features worth exploring. Furthermore, we could consider relaxing the setting of indefinitely stable phases, like MBL, to the so-called `pre-thermal' setting~\cite{prethermal_abanin2015asymptotic,prethermal_dominic} where the system is stable for times that are exponential in system size, and study the role of non-Abelian symmetries.

{\it Note added: } {As we were finishing this draft, we learned of a related paper: 
``Localization-protected order in spin chains with non-Abelian discrete symmetries'' (arXiv:1706.00022) by
Aaron J. Friedman, Romain Vasseur, Andrew C. Potter, S. A. Parameswaran}.

 \section*{Acknowledgements}
A.P would like to acknowledge the stimulating environment of the Boulder Summer School for Condensed Matter Physics-2016 which is supported by the NSF and the University of Colorado, where this project was initiated. We are grateful to Rahul Nandkishore, Bela Bauer, Sid Parameshwaran and Andrew Potter for helpful discussions. T.-C.W is supported by NSF Grants No. PHY 1314748 and No. 1620252. LF is supported by NSF DMR-1519579 and Sloan FG-2015-65244.

\bibliography{references}{}
\bibliographystyle{unsrt}

\appendix
\section{Incompatibility of non-Abelian symmetries with full MBL}~\label{sec:Potter review}
Here, we review the main hypothesis of Potter and Vasseur~\cite{potter2016symmetry}. The authors consider the case of a fully MBL system (as opposed to the case of a partially local, partially thermal system with mobility edges), and study the compatibility of MBL in the presence of various global symmetries. The working definition of an MBL system they consider is the existence of a complete set of quasi-local conserved quantities with associated quasi-local projectors in terms of which the Hamiltonian, $H$  can be defined as~\cite{SerbynAbanin2013LocalConservationLaws,huseNandkishoreOganesyan2014phenomenology,VoskAltman2013_MBLRGfixedpoint},
	\begin{multline}\label{eq:lbit hamiltonian}
		H = \sum_{i=1}^L \sum_{\alpha = 1}^D E[i]_{\alpha} \hat{P}[i]_{\alpha} \\	+ \sum_{i \neq j =1}^L \sum_{\alpha,\beta = 1}^D E[i,j]_{\alpha,\beta} \hat{P}[i]_\alpha \hat{P}[j]_\beta \\ + \sum_{i \neq j \neq k=1}^L \sum_{\alpha,\beta, \gamma = 1}^D E[i,j,k]_{\alpha,\beta,\gamma} \hat{P}[i]_\alpha \hat{P}[j]_\beta \hat{P}[k]_\gamma + \ldots
	\end{multline}
Here, $L$ is the number of spins, $\hat{P}[i]_\alpha$ is the projector onto the $\alpha^{th}$ quasi-local conserved quantity at the $i^{th}$ location, $D$ is the number of conserved quantities and the $E$'s are constants that fix the energy eigenvalues. Furthermore, we can apply a finite depth quantum unitary circuit,  $W$, that re-expresses the conserved quantities as local degrees of freedom. These local objects are called l-bits in terms of which the original spins (p-bits~\footnote{Note that we do not restrict ourselves to a local two-dimensional Hilbert space. This means, we should be talking about p-dits and l-dits instead of p-bits and l-bits. However, to keep with standard terminology, we will use the latter names.}) and Hamiltonian can be defined. This is typically called the l-bit Hamiltonian, $H_{lbit}$. 

Let us now consider the case when the Hamiltonian $H$ has a global on-site symmetry $i.e$  $H$ commutes with the unitary representation of some group, $G$ of the form $U(g) = \bigotimes_{i=1}^L V_i(g)$, where, $V_i(g)$ acts on each physical spin.
\begin{equation}
g:H \rightarrow U(g) H U^\dagger(g) = H
\end{equation}
In this case, representation theory of the group $G$ plays a role in constraining the allowed form of the l-bit Hamiltonian~[\ref{eq:lbit hamiltonian}]. For spins (p-bits) to allow a well defined on-site group action, the local Hilbert space of each spin must correspond to some faithful representation of that group. This must also be true for a tensor product of the p-bits that constitute an l-bit. In other words, we can write the l-bit- projectors $\hat{P}[i]_\alpha$ using a fully reduced basis that can be labeled as $\ket{\Gamma, m_\Gamma; d_\Gamma}$
\begin{equation}
\hat{P}_{\Gamma, m_\Gamma; d_\Gamma} = \ket{\Gamma, m_\Gamma; d_\Gamma} \bra{\Gamma, m_\Gamma; d_\Gamma}
\end{equation}
We define the different labels below and compare them with the well known case of the representations of the rotation group $SO(3)$: 
\begin{itemize}
	\item $\Gamma = 1\ldots N_R$ labels the irreducible representation (irrep) of the group and is equivalent to the total angular momentum quantum number, $j$. The number of values it can take is equal to the number of irreps of $G$, $N_R$.
	\item $m_\Gamma = 1\ldots |\Gamma|$  is equivalent to the azimuthal quantum number $m_j$. The number of values it can take is equal to the dimension of the irrep, $|\Gamma|$. 
	\item $d_\Gamma = 1 \ldots D_\Gamma$ labels which of the $D_\Gamma$ copies of the $\Gamma$ irrep is being considered.
\end{itemize}    

 In this basis, the action of the group is
 \begin{equation}
g:\ket{\Gamma, m_\Gamma; d_\Gamma} \rightarrow \sum_{n_\Gamma = 1}^{|\Gamma|}\Gamma(g)_{n_\Gamma, m_\Gamma} \ket{\Gamma, n_\Gamma; d_\Gamma}
 \end{equation}
Demanding the invariance of the Hamiltonian~[\ref{eq:lbit hamiltonian}] under group action and invoking Schur's lemma~\cite{ramond_group} irrep-wise, we get the constrained form of the Hamiltonian compatible with on-site symmetry as
	\begin{multline}\label{eq:symmetric_lbit hamiltonian}
	H = \sum_{i=1}^L \sum_{\Gamma = 1}^{N_R} \sum_{d^i_\Gamma =1}^{D^i_\Gamma} E[i]_{\Gamma,d^i_\Gamma} \hat{P}[i]_{\Gamma,d^i_\Gamma} \\+ \sum_{i \neq j =1}^L \sum_{\Gamma, \Gamma' = 1}^{N_R} \sum_{d^i_\Gamma =1}^{D^i_\Gamma} \sum_{d^j_{\Gamma'} =1}^{D^j_{\Gamma'}} E[i,j]_{\Gamma,d^i_{\Gamma},\Gamma',d^j_{\Gamma'}} \hat{P}[i]_{\Gamma,d^i_{\Gamma}} \hat{P}[j]_{\Gamma',d^j_{\Gamma'}} \\ + \ldots
	\end{multline}
	where, 
	\begin{equation}
	\hat{P}_{\Gamma;d_\Gamma} \equiv \sum_{m_\Gamma = 1}^{|\Gamma|} \outerproduct{\Gamma,m_\Gamma;d_\Gamma}{\Gamma,m_\Gamma;d_\Gamma}.
	\end{equation}
	We now consider the cases of Abelian and non-Abelian symmetries separately. If $G$ is an Abelian group, all irreps are one dimensional ($|\Gamma|=1$). This means that all projectors $P_{\Gamma;d_\Gamma}$ are rank-1 which preserves the form~[\ref{eq:lbit hamiltonian}]. With sufficient disorder, resulting in sufficiently random $E's$, we can imagine that all degeneracies are lifted and we can obtain an MBL phase stable to perturbations. However, if $G$ is non-Abelian, not all irreps are one dimensional. This means that we invariably have higher-rank local projectors giving us only a partial set of conserved quantities rather than complete which leads to degeneracies that are extensive in system size. This is clearly seen by examining the eigenvalues and eigenvectors of Eq~[\ref{eq:symmetric_lbit hamiltonian}].
\begin{eqnarray}
&&H \ket{\{\Gamma,m_\Gamma;d_\Gamma\}} = \mathcal{E}(\{\Gamma;d_\Gamma\}) \ket{\{\Gamma;d_\Gamma\}}, \nonumber \\
&&\mathcal{E}(\{\Gamma;d_\Gamma\}) = \sum_{i=1}^L E[i]_{\Gamma_i,d^i_{\Gamma_i}} + \sum_{i \neq j =1}^L E[i,j]_{\Gamma_i,d^i_{\Gamma_i},\Gamma'_j,d^j_{\Gamma'_j}}  + \ldots \nonumber\\
&&\ket{\{\Gamma,m_\Gamma;d_\Gamma\}} = \bigotimes_{i=1}^L \ket{\Gamma_i,m_{\Gamma_i};d^i_{\Gamma_i}} .
\end{eqnarray}
Note that none of the eigenvalues, $\mathcal{E}$ have any labels corresponding to the inner multiplicity of the irreps $m_\Gamma$. This means that the eigenstate $\ket{\{\Gamma,m_\Gamma;d_\Gamma\}}$ has a degeneracy of $|\Gamma_1| \times |\Gamma_2| \times \ldots |\Gamma_L|$ which is clearly extensive in system size when $G$ is non-Abelian. The authors of Ref~\onlinecite{potter2016symmetry} state that under the influence of perturbations, such a degeneracy is susceptible to long range resonances which destabilizes MBL. Furthermore, they suggest a possible set of phases for the system to be in depending on the nature of the global symmetry group $G$. Here we list the possibilities for the case of finite groups such as  $S_{n \ge 3}$ and $D_{n \ge 3}$, (i) Ergodic/ thermal phase, (ii) The so-called MBL spin-glass (SG)~\cite{khemaniSondhi2016floquet} phase characterized by localization with symmetry spontaneously broken (SSB) to an Abelian subgroup, and (iii) The so-called quantum critical glass phase (QCG)~\cite{glass_sidPotterPRL2015,glass_pekkerRefael_PRX2014,glass_kang2016} characterized by critical scaling of entanglement entropy. Within the accuracy of our numerical analysis, our findings are consistent with the conjecture of Potter and Vasseur.
\section{Constructing symmetric Hamiltonians}\label{app:building symmetric ham}
In this appendix, we give details of how the Hamiltonian used in the main text, Eq~[\ref{eq:S3 Hamiltonian}] was constructed. We also detail a general technique to construct local symmetric operators with which we can build spin Hamiltonians invariant under any on-site symmetry in any dimension. The construction of the 1D $S_3$ invariant Hamiltonian of the main text is a specific application of this general technique.

\subsection{Building the $S_3$ invariant Hamiltonian}
\subsubsection{Basis properties of $S_3$ and its representation used.}
We first review some basic properties of the group $S_3$ and its representation used in this paper. $S_3$, the symmetry group of three objects is the smallest non-Abelian group. It is of order 6 and can be generated using two elements and the following presentation
\begin{equation}
\innerproduct{a,x}{a^3=x^2=1, xax = a^{-1}}.
\end{equation}
It has three irreducible representations, $\mathbf{1}, \mathbf{1}', \mathbf{2}$ which can be written as 
\begin{enumerate}
	\item  $\chi^{\mathbf{1}}(a) = 1$, $\chi^{\mathbf{1}}(x) = 1$,
	\item  $\chi^{\mathbf{1'}}(a) = 1$, $\chi^{\mathbf{1'}}(x) = -1$,
	\item  $\Gamma^\mathbf{2}(a)= \begin{pmatrix}
	\omega & 0 \\
	0 & \omega^*
	\end{pmatrix}$, $\Gamma^\mathbf{2}(x)= \begin{pmatrix}
	0 & 1 \\
	1 & 0
	\end{pmatrix}$ ,
\end{enumerate}
where $\omega = e^{2 \pi i/ 3}$. The local Hilbert space we have chosen for each spin lives is the three-dimensional reducible representation $\mathbf{2} \oplus \mathbf{1}'$. We use the eigenspace of the spin-1 angular momentum operator $S^z$ to label the irreps. The $\mathbf{2}$ irrep is encoded in the two-dimensional $\pm1$ eigenspace of $S^z$ (which we will call $\ket{\pm}$) and the $\mathbf{1}'$ is encoded in the one-dimensional $0$ eigenspace of $S^z$ (which we will call $\ket{0}$). The matrix representation of a general group element in this basis looks like
\begin{equation}
V(g) = \begin{pmatrix}
\Gamma^\mathbf{2}(g)_{11}& 0 & \Gamma^\mathbf{2}(g)_{12}\\
0 & \chi^{\mathbf{1'}}(g) & 0\\
\Gamma^\mathbf{2}(g)_{21} & 0 & \Gamma^\mathbf{2}(g)_{22}.
\end{pmatrix}
\end{equation}
In particular, the generators have the following representation
\begin{eqnarray}
V(a) = \begin{pmatrix}
\omega & 0 & 0\\
0 & 1 & 0\\
0 & 0 & \omega^*
\end{pmatrix},~
V(x) = \begin{pmatrix}
0 & 0 & 1\\
0 & -1 & 0\\
1 & 0 & 0
\end{pmatrix}.
\end{eqnarray}
\subsubsection{1-spin operator}
We first construct a 1-spin invariant operator. We start with a general operator that acts on the space of a single spin 
\begin{equation}
\hat{\lambda} = 
\begin{pmatrix}
\lambda_{11} & \lambda_{12} & \lambda_{13} \\
\lambda_{21} & \lambda_{22} & \lambda_{23} \\
\lambda_{31} & \lambda_{32} & \lambda_{33} \\
\end{pmatrix}
\end{equation}
We now demand that $\hat{\lambda}$ is invariant under conjugation by $V(g)$, $i.e.$, the representation of symmetry on a single spin. 
\begin{equation}
V(g) \hat{\lambda} V^\dagger(g) = \hat{\lambda}
\end{equation}
Schur's lemma~\cite{ramond_group} constrains the matrix elements of $\hat{\lambda}$ in the following way:
\begin{enumerate}
	\item $\hat{\lambda}$ cannot mix basis states corresponding to \emph{different} irreps.  
	\item $\hat{\lambda}$ must be proportional to the identity operator when acting on basis states corresponding to the \emph{internal} states of the \emph{same} irrep. 
	\item If there are \emph{multiple copies} of the same irrep, $\hat{\lambda}$ can mix the basis states corresponding the \emph{same internal state}  of different copies but should still be proportional to the identity operator as an action on the internal states. 
\end{enumerate}	
The meaning of these constraints should become clearer with the applications that will follow. For a single spin operator, since we have only one copy of each irrep, constraint 3 does not apply. Applying constraints 1 and 2, we get the form of $\hat{\lambda}$:
\begin{multline}
\hat{\lambda} = \begin{pmatrix}
\lambda_{\mathbf{2}} & 0 & 0 \\
0 & \lambda_{\mathbf{1'}} & 0 \\
0 & 0 & \lambda_{\mathbf{2}}
\end{pmatrix}  = \lambda_{\mathbf{1'}} \outerproduct{0}{0} + \lambda_{\mathbf{2}} \left(\outerproduct{+}{+}+\outerproduct{-}{-} \right)  \\= (\lambda_{\mathbf{2}}-\lambda_{\mathbf{1'}}) (S^z)^2 + \lambda_{\mathbf{1'}} \mathbb{1}_3
\end{multline}
From this, we can read off the only non-trivial 1-spin symmetric operator, $(S^z)^2$ which is also Hermitian.

\subsubsection{2-spin operator}
In order to find a symmetric 2-spin operator, we follow the same logic as that of a 1-spin operator. First, we start with a general operator that acts on the 9 dimensional vector space of 2 spins,
\begin{equation}
\hat{J} = \begin{pmatrix}
J_{11} & J_{12} & \ldots & J_{19} \\
J_{21} & J_{22} & \ldots & J_{29} \\
\vdots &  & \ddots & \vdots \\
J_{91} & J_{92} & \ldots & J_{99}.
\end{pmatrix}
\end{equation}
Next, we demand invariance under conjugation by $V(g) \otimes V(g)$, $i.e.$, the representation of symmetry on two spins. 
\begin{equation}
V(g) \otimes V(g) \hat{J} V^\dagger(g) \otimes V^\dagger(g) = \hat{J}
\end{equation}
We now have to impose the constraints coming from Schur's lemma. However, in order to do that, we need to find out the irrep content of $V(g) \otimes V(g)$. For this, we first list the Clebsch Gordan (CG) decomposition that gives us the outcomes of fusing different $S_3$ irreps. This is the generalization of angular momentum addition of $SU(2)$ irreps. Note that we exclude the trivial case of fusion with the trivial irrep $\mathbf{1}$,
\begin{eqnarray}
\mathbf{1'} \otimes \mathbf{1'} &\cong& \mathbf{1} \nonumber \\
\mathbf{2}~ \otimes \mathbf{1'} &\cong& \mathbf{2} \nonumber \\
\mathbf{2}~ \otimes \mathbf{2}~ &\cong& \mathbf{2} \oplus \mathbf{1'} \oplus \mathbf{1} \nonumber
\end{eqnarray}
The irrep content of $V(g) \otimes V(g)$ is obtained from the CG decomposition,
\begin{equation}
(\mathbf{2} \oplus \mathbf{1'}) \otimes (\mathbf{2} \oplus \mathbf{1'}) \cong \mathbf{1} \oplus \mathbf{1} \oplus \mathbf{1'} \oplus \mathbf{2} \oplus \mathbf{2} \oplus \mathbf{2}.
\end{equation}
 It is clear that we are bound to have multiple copies of the same irrep in this decomposition for which, using Constraint 3 imposed by Schur's lemma, unlike the single spin case, we can get off diagonal operators. Let us list and label the different instances of each irrep appearing in the decomposition for convenience.
\begin{eqnarray}
\mathbf{1'} &\otimes& \mathbf{1'} \rightarrow \mathbf{1}_A \nonumber \\
\mathbf{2} &\otimes& \mathbf{2} \rightarrow \mathbf{1}_B \nonumber \\
\mathbf{2} &\otimes& \mathbf{2} \rightarrow \mathbf{1'} \nonumber \\
\mathbf{1'} &\otimes& \mathbf{2} \rightarrow \mathbf{2}_A\nonumber \\
\mathbf{2} &\otimes& \mathbf{1'} \rightarrow \mathbf{2}_B\nonumber \\
\mathbf{2} &\otimes& \mathbf{2} \rightarrow \mathbf{2}_C \nonumber 
\end{eqnarray}
The subscripts label the copy of the irrep. We next need the basis of the representation of each irrep in $V(g) \otimes V(g)$. These can be written in terms of the original basis states (labeled by $S^z$ eigenvalues) using CG coefficients which we calculate using the technique by Sakata~\cite{sakata} (also see \cite{abhishodh_pra}). 
\begin{eqnarray}
\ket{\mathbf{1}_A} &=& \ket{0} \ket{0} \nonumber \\
\ket{\mathbf{1}_B} &=& \frac{\ket{+}\ket{-}+\ket{-}\ket{+}}{\sqrt{2}} \nonumber\\
\ket{\mathbf{1'}}&=& \frac{\ket{+}\ket{-}-\ket{-}\ket{+}}{\sqrt{2}} \nonumber\\
\ket{\mathbf{2}_A,\pm} &=& \pm \ket{0} \ket{\pm} \nonumber \\
\ket{\mathbf{2}_B,\pm} &=& \pm \ket{\pm} \ket{0} \nonumber \\
\ket{\mathbf{2}_C,\pm} &=& \ket{\mp}  \ket{\mp} \label{eq:S3 Cg}
\end{eqnarray}
Using this, we have the 2-spin  $S_3$ symmetric operator constrained by Schur's lemma
\begin{multline} \label{eq:S3 2body symm ketform}
\hat{J} = J^\mathbf{1'} \outerproduct{\mathbf{1'}}{\mathbf{1'}}  + \sum_{\mu, \nu = A,B} J^\mathbf{1}_{\mu \nu} \outerproduct{\mathbf{1}_\mu}{\mathbf{1}_\nu}  \\
+\sum_{\mu, \nu = A,B,C} J^\mathbf{2}_{\mu \nu} (\outerproduct{\mathbf{2}_\mu, +}{\mathbf{2}_\nu, +}  + \outerproduct{\mathbf{2}_\mu, -}{\mathbf{2}_\nu, -} )
\end{multline}
As in the case of 1-spin invariant operator, we can again read off the several independent symmetric 2-spin operators by simplifying  Eq~.\ref{eq:S3 2body symm ketform}. However, since we need the operators to be Hermitian, we take Hermitian combinations of these operators. We finally list the non-trivial independent Hermitian operators expressed in terms of spin-1 operators.
\begin{eqnarray}
\hat{J}_1 &=& S^z \otimes S^z,\nonumber \\
\hat{J}_2 &=& (S^z)^2 \otimes (S^z)^2 \nonumber \\
\hat{J}_3 &=& (S^+)^2 \otimes (S^-)^2 + (S^-)^2 \otimes (S^+)^2 \nonumber \\
\hat{J}_4 &=& (S^+ S^z) \otimes (S^- S^z) + (S^- S^z) \otimes (S^+ S^z)  + h.c \nonumber \\
\hat{J}_5 &=&(S^- S^z) \otimes (S^z S^+) + (S^+ S^z) \otimes (S^z S^-) + h.c \nonumber \\
\hat{J}_6 &=& (S^+ S^z) \otimes (S^+)^2 + (S^- S^z) \otimes (S^-)^2 + h.c \nonumber \\	 
\hat{J}_7 &=&(S^+)^2 \otimes (S^+S^z) + (S^-)^2 \otimes (S^-S^z) + h.c \nonumber 
\end{eqnarray}
To construct the Hamiltonian~.\ref{eq:S3 Hamiltonian}, we have used $(S^z)^2$ and $\hat{J}_1$ to build the disordered part $H_d$ and $\hat{J}_3$, $\hat{J}_4$, $\hat{J}_7$ to build the thermal part of the Hamiltonian, $H_t$. Since $\hat{J}_5$ and $\hat{J}_6$ are mirrored versions of $\hat{J}_4$ and $\hat{J}_7$ respectively, we can keep our thermal term sufficiently generic even if leave them out. Note that while we have constructed a Hamiltonian for a 1D spin chain, the symmetric operators constructed using this formalism can be used to build Hamiltonians for any spatial dimensions.

\subsection{General technique}
We now give details of a general procedure that can be applied to obtain n-spin symmetric operators invariant under a representation of any on-site symmetry group. The schematic procedure is as follows
\begin{enumerate}
	\item To construct symmetric n-spin operators, we first write down general operators that act on the Hilbert space of n spins. 
	\item We then demand the invariance of this operator under symmetry action.
	\item Using Schur's lemma, we constrain the matrix elements of the n-spin operator and read off the independent operators.
	\item If required, we finally take the hermitian combinations of the independent operators. 
\end{enumerate}
If $G$ is the group we are considering, the most general local Hilbert space for the spin compatible with G-action can be an arbitrary number of copies of each irrep of $G$. Like mentioned in Sec~\ref{sec:Potter review}, we can choose the basis as	$\ket{\Gamma, m_\Gamma; d_\Gamma}$ where the symbols have the same meaning as before.  The matrix representation of the group operators in this basis is block diagonal.
\begin{eqnarray}
	U(g) &=& \bigotimes_{i=1}^L V_i(g) \\
	V_i(g) &=& \bigoplus_{\Gamma=1}^{N_R} \mathbb{1}_{D_\Gamma^i} \otimes \Gamma(g)
\end{eqnarray}
The ``passive" group action on the basis is
\begin{equation}
	g:\ket{\Gamma, m_\Gamma; d_\Gamma} \mapsto \sum_{n_\Gamma = 1}^{|\Gamma|}\Gamma(g)_{n_\Gamma, m_\Gamma} \ket{\Gamma, n_\Gamma; d_\Gamma}
\end{equation}

\subsubsection{1-spin symmetric operator}
Let us now start with symmetric 1-spin operators. The most general 1-spin operator we can write down is 
\begin{equation}\label{eq:general 1 spin operator}
	\hat{\lambda} =  \lambda^{\Gamma, m_\Gamma; d_\Gamma}_{\Gamma', m_{\Gamma'}; d_{\Gamma'}} \outerproduct{\Gamma, m_\Gamma; d_\Gamma}{\Gamma', m_{\Gamma'}; d_{\Gamma'}}
\end{equation}
Note that for notational convenience, here and henceforth, we assume summation over repeated indices. Demanding invariance under conjugation with $V(g)$, we have 
\begin{eqnarray} \label{eq:1 body invariance}
	V(g) \lambda V(g)^\dagger &=&  \lambda \\
\implies	  \Gamma(g)_{m_\Gamma,n_\Gamma} (\Gamma'(g)_{m_{\Gamma'},n_{\Gamma'}})^* \lambda^{\Gamma, n_\Gamma; d_\Gamma}_{\Gamma', n_{\Gamma'}; d_{\Gamma'}} &=& \lambda^{\Gamma, m_\Gamma; d_\Gamma}_{\Gamma', m_{\Gamma'}; d_{\Gamma'}} \nonumber.
\end{eqnarray}
In matrix form, the condition on $\lambda$ becomes
\begin{equation}
	[\Gamma(g)] [\lambda_{\Gamma';d_{\Gamma'}}^{\Gamma;d_{\Gamma}}] = 	 [\lambda_{\Gamma';d_{\Gamma'}}^{\Gamma;d_{\Gamma}}] [\Gamma'(g)].
\end{equation}
This means that $[\lambda_{\Gamma';d_{\Gamma'}}^{\Gamma;d_{\Gamma}}]$ is an intertwiner between the irreps $\Gamma$ and $\Gamma'$. Such a matrix is constrained by Schur's lemma:
\begin{eqnarray}
	[\lambda_{\Gamma';d_{\Gamma'}}^{\Gamma;d_{\Gamma}}] = 0 &\text{      if      }& \Gamma \neq \Gamma' \nonumber \\
	\propto \mathbb{1}_\Gamma &\text{     if     }& \Gamma = \Gamma' \nonumber \\
	\implies \lambda^{\Gamma, m_\Gamma; d_\Gamma}_{\Gamma', m'_{\Gamma'}; d'_{\Gamma'}} = 0 &\text{      if      }& \Gamma \neq \Gamma' \nonumber \\
	\propto \delta_{m_\Gamma,m'_{\Gamma'}}  &\text{     if     }& \Gamma = \Gamma' 
\end{eqnarray}

Using this in Eq~\ref{eq:general 1 spin operator}, we get the form of a symmetric 1-spin operator,
\begin{equation} \label{eq:1 body symmetric}
\hat{\lambda} =  \lambda^{\Gamma}_{d_\Gamma,d'_\Gamma} \outerproduct{\Gamma,m_\Gamma;d_\Gamma}{\Gamma,m_\Gamma;d'_\Gamma}.
\end{equation}
Here, the non-zero matrix elements $\lambda^{\Gamma}_{d_\Gamma,d'_\Gamma}$ are free parameters that act on the degenerate subspace associated with the outer multiplicity of each irrep.
\subsubsection{2-spin symmetric operator}
We now consider 2-spin symmetric operators. The most general operator that acts on the 2-spin Hilbert space is 
\begin{multline} \label{eq:2 body general}
\hat{J} = J^{({\Delta},m_{\Delta};d_{\Delta}),({\Lambda},m_{\Lambda};d_{\Lambda})}_{({\Delta'},m_{\Delta'};d_{\Delta'}),({\Lambda'},m_{\Lambda'};d_{\Lambda'})} \outerproduct{{\Delta},m_{\Delta};d_{\Delta}}{{\Delta'},m_{\Delta'};d_{\Delta'}} \\ \outerproduct{{\Lambda},m_{\Lambda};d_{\Lambda}}{{\Lambda'},m_{\Lambda'};d_{\Lambda'} }
\end{multline}
 For this operator to be symmetric, it needs to be invariant under conjugation by $V_{1}(g) \otimes V_{2}(g)$,
\begin{eqnarray} \label{eq:2body_invariance}
\left[V_{1}(g) \otimes V_{2}(g)\right] J \left[V_{1}(g) \otimes V_{2}(g)\right]^\dagger = J.
\end{eqnarray}  
To use the techniques like we did for the 1-spin operator in the previous subsection, we need to first block-diagonalize $V_{1}(g) \otimes V_{2}(g)$ using a suitable basis change and then use Schur's lemma. This redefinition of the basis states can be done using Clebsch-Gordan (CG) coefficients. Recall that the irrep content of a direct product of two irreps is schematically given by the CG series:
\begin{equation} \label{eq:CG series}
\Delta \otimes \Lambda = \bigoplus_\Gamma N_{\Delta \Lambda}^\Gamma~ \Gamma
\end{equation}
$N_{\Delta \Lambda}^\Gamma$ denotes the number of copies of the $\Gamma$ irrep that exists in the fusion outcome of $\Delta \otimes \Lambda$. At the level of representations, Eq~.\ref{eq:CG series} tells us that there exists a change of basis by a unitary matrix $C_{\Delta \Lambda}$ that fully reduces the direct product of the irreps $\Delta \otimes \Lambda$:
\begin{eqnarray}
\Delta(g) \otimes \Lambda(g) \cong \bigoplus_{\Gamma} \mathbb{1}_{N_{\Delta \Lambda}^\Gamma} \otimes \Gamma(g),\\
	C_{\Delta \Lambda} \left(\Delta(g) \otimes \Lambda(g)\right) C_{\Delta \Lambda}^\dagger = \bigoplus_\Gamma \mathbb{1}_{N_{\Delta \Lambda}^\Gamma} \otimes \Gamma(g),\\
	C^{\Gamma;\alpha_\Gamma}_{\Delta \Lambda} \left(\Delta(g) \otimes \Lambda(g)\right) C^{\Gamma;\alpha_\Gamma \dagger}_{\Delta \Lambda} = \Gamma(g).
	\end{eqnarray}
$C^{\Gamma;\alpha_\Gamma}_{\Delta \Lambda}$ are isometries whose matrix elements, $[C^{\Gamma;\alpha_\Gamma}_{\Delta \Lambda}]_{m_\Delta,m_\Lambda}^{m_\Gamma}$ are the CG coefficients that project $\Delta \times \Lambda$ onto the $\alpha_\Gamma^{th}$ copy of the irrep $\Gamma$, where, $\alpha_\Gamma = 1 \ldots N_{\Delta \Lambda}^\Gamma$. It is useful to look at the `passive' action of the CG coefficients on the basis kets to remind us of the actual ``change-of-basis" action,
\begin{equation}\label{eq:cg basis action}
[C^{*\Gamma;\alpha_\Gamma}_{\Delta \Lambda}]_{m_\Delta,m_\Lambda}^{m_\Gamma} \ket{\Delta,m_\Delta} \ket{\Lambda,m_\Lambda} = \ket{\Gamma,m_\Gamma,\alpha_\Gamma}
\end{equation}
 Now, consider the following equivalences
\begin{multline} \label{eq:reduce 2 body symmetry}
V_1(g) \otimes V_2(g) = \left[\bigoplus_\Delta \mathbb{1}_{D^1_\Delta} \otimes \Delta(g) \right] \otimes \left[ \bigoplus_\Lambda \mathbb{1}_{D^2_\Lambda} \otimes \Lambda(g) \right] \\ 
\cong \bigoplus_{(\Delta,\Lambda)} \left[\mathbb{1}_{D^1_\Delta \times D^2_\Lambda}\right] \otimes \left[\Delta(g) \otimes \Lambda(g) \right]\\
\cong \bigoplus_{(\Delta,\Lambda)} \left[\mathbb{1}_{D^1_\Delta \times D^2_\Lambda}\right] \otimes \left[\bigoplus_{\Gamma} \mathbb{1}_{N^\Gamma_{\Delta \Lambda}} \otimes \Gamma(g) \right]\\
\cong \bigoplus_{\Gamma} \mathbb{1}_{\mathcal{D}^{1,2}_{\Gamma}} \otimes \Gamma(g).
\end{multline}
Here,
\begin{equation}
\mathcal{D}^{1,2}_\Gamma = \sum\limits_{(\Delta,\Lambda)|\Gamma \in \Delta \otimes \Lambda} D^1_\Delta~ D^2_\Lambda ~N^\Gamma_{\Delta \Lambda}
\end{equation}
  is the number of `fusion channels' of the kind $\Delta \otimes \Lambda \rightarrow \Gamma$ available to produce the irrep $\Gamma$ using the irreps in the 2-spin Hilbert space. In short, Eq~[\ref{eq:reduce 2 body symmetry}] tells us that there exists some unitary matrix $W$ which block diagonalizes $V_1(g) \otimes V_2(g)$
\begin{eqnarray}
	W \left[V_1(g) \otimes V_2(g)\right] W^\dagger = \tilde{V}(g)\\
	\tilde{V}(g) =  \bigoplus_{\Gamma \in \Delta \otimes \Lambda} \mathbb{1}_{\mathcal{D}^{1,2}_{\Gamma}} \otimes \Gamma(g).
\end{eqnarray}
When viewed as a matrix, $W$ contains operations to both reorder indices appropriately as well as use the CG coefficients to reduce the direct product of irreps form $V_1$ and $V_2$ block-by-block. If we operate $W$ on both sides of the Eq~[\ref{eq:2body_invariance}] and call $WJW^\dagger = K$, we get
\begin{equation} \label{eq:K symmetry}
\tilde{V}(g) K \tilde{V}(g)^\dagger = K
\end{equation}
The matrix elements of $K$ can be written in terms of those of $J$ and CG coefficients.
\begin{multline}
K^{\Gamma, m_\Gamma; c_\Gamma}_{\Gamma', m_{\Gamma'}; c_{\Gamma'}} = \left[C^{\Gamma;\alpha_\Gamma}_{\Delta \Lambda}\right]^{m_\Gamma}_{m_\Delta,m_\Lambda} \left[C^{*\Gamma';\alpha_\Gamma'}_{\Delta' \Lambda'}\right]^{m_{\Gamma'}}_{m_{\Delta'},m_{\Lambda'}} \\ J^{({\Delta},m_{\Delta};d_{\Delta}),({\Lambda},m_{\Lambda};d_{\Lambda})}_{({\Delta'},m_{\Delta'};d_{\Delta'}),({\Lambda'},m_{\Lambda'};d_{\Lambda'})} 
\end{multline}
Eq~[\ref{eq:K symmetry}] is of the same form as Eq~[\ref{eq:1 body invariance}] and we can use Schur's lemma again to constrain the elements of $K$, 
\begin{eqnarray}
K^{\Gamma, m_\Gamma; c_\Gamma}_{\Gamma', m_{\Gamma'}; c_{\Gamma'}} = 0 &\text{      if      }& \Gamma \neq \Gamma' \nonumber\\
\propto \delta_{m_\Gamma,m_{\Gamma'}}  &\text{     if     }& \Gamma = \Gamma'.
\end{eqnarray}
Note that $c_\Gamma$ is a collective index of compatible $\left(d_\Delta,d_\Lambda,\alpha_\Gamma\right)$ that runs over the $\mathcal{D}_\Gamma$ different fusion channels mentioned above. Also, note that we use the convention $\left[C^{\Gamma;\alpha_\Gamma}_{\Delta \Lambda}\right]^{m_\Gamma}_{m_\Delta,m_\Lambda}=0$ if $\Gamma \notin \Delta \otimes \Lambda$.  Finally, we can undo the transformation of $W$ and get the elements of $\hat{J}$. Since it is important, we expand $c_\Gamma$:
\begin{multline} \label{eq:2 body matrix elements constrained}
J^{({\Delta},m_{\Delta};d_{\Delta}),({\Lambda},m_{\Lambda};d_{\Lambda})}_{({\Delta'},m_{\Delta'};d_{\Delta'}),({\Lambda'},m_{\Lambda'};d_{\Lambda'})}\\ 
= K^{\Gamma; (d_\Delta, d_\Lambda, \alpha_\Gamma)}_{(d_{\Delta'}, d_{\Lambda'},\beta_{\Gamma})} \left[C^{*\Gamma;\alpha_\Gamma}_{\Delta \Lambda}\right]^{m_\Gamma}_{m_\Delta,m_\Lambda}  \left[C^{\Gamma;\beta_\Gamma}_{\Delta' \Lambda'}\right]^{m_{\Gamma}}_{m_{\Delta'},m_{\Lambda'}}
\end{multline}
Plugging in Eq~[\ref{eq:2 body matrix elements constrained}] into Eq~[\ref{eq:2 body general}], we get the general form of the symmetric 2-spin operator. 
\begin{multline}
\hat{J} = K^{\Gamma; (d_\Delta, d_\Lambda, \alpha_\Gamma)}_{(d_{\Delta'}, d_{\Lambda'},\beta_{\Gamma})} \left[C^{*\Gamma;\alpha_\Gamma}_{\Delta \Lambda}\right]^{m_\Gamma}_{m_\Delta,m_\Lambda}  \left[C^{\Gamma;\beta_\Gamma}_{\Delta' \Lambda'}\right]^{m_{\Gamma}}_{m_{\Delta'},m_{\Lambda'}} \\  \outerproduct{{\Delta},m_{\Delta};d_{\Delta}}{{\Delta'},m_{\Delta'};d_{\Delta'}} \\ \outerproduct{{\Lambda},m_{\Lambda};d_{\Lambda}}{{\Lambda'},m_{\Lambda'};d_{\Lambda'} }.
\end{multline}
This can be greatly simplified using Eq~[\ref{eq:cg basis action}]
\begin{equation} \label{eq:2 spin symmetric}
\hat{J} = K^{\Gamma; c_\Gamma}_{d_\Gamma} \outerproduct{\Gamma,m_\Gamma;c_\Gamma}{\Gamma,m_\Gamma;c'_\Gamma}.
\end{equation}
Where, we have once again reintroduced the short hand notation $c_\Gamma$ to denote the fusion channels labeled by compatible $(d_\Delta, d_\Lambda, \alpha_\Gamma)$ to produce $\Gamma$. $K^{\Gamma; (d_\Delta, d_\Lambda, \alpha_\Gamma)}_{(d_{\Delta'},d_{\Lambda'}, \beta_\Gamma)}$ are now the free parameters. In this form, we can clearly see the similarity with the symmetric 1-spin operator Eq~[\ref{eq:1 body symmetric}]. This helps us see the general picture with arbitrary n-local operators. In a fully reduced basis of the n-spin Hilbert space, matrix elements of symmetric operators can only act on the outer multiplicity space of each irrep.

\section{Detecting the irrep of the eigenstates} \label{app:detecting irrep}
In this appendix, we give details of how we determine the irrep a given eigenstate transforms as. If $U(g) = \bigotimes_{i=1}^L V_i(g)$ is the many-body representation of the on-site symmetry, we want to find out the irrep $\Gamma$ that an eigenstate $\ket{\epsilon}$ or more generally, a set of degenerate eigenstates $\{\ket{\epsilon}_a\}$ transform as 
\begin{equation}
U(g) \ket{\epsilon}_a = \sum\limits_{b=1}^{|\Gamma|} \Gamma(g)_{ab} \ket{\epsilon}_b
\end{equation}
This is an easy task for the symmetry group $SU(2)$, where, all we need to do is operate the eigenstates with the total angular momentum operator, 
\begin{eqnarray}
\vec{S}_{tot}^2 &=& \sum\limits_{a=x,y,z} \left(\sum\limits_{i=1}^{L} S^a_i \right) \left(\sum\limits_{i=1}^{L} S^a_i \right) \\
\vec{S}_{tot}^2 \ket{\epsilon} &=& j (j+1) \ket{\epsilon}
\end{eqnarray}
This way, given an eigenstate that transforms as an irrep of $SU(2)$, we can extract the quantum number $j$ which labels the irrep. For general finite groups, we are not aware of an equivalent technique. However, we now present a strategy that works well for the group $S_3$ in the form of the following theorem. 
\begin{theorem}
Given a normalized vector $\ket{\epsilon}$, $\innerproduct{\epsilon}{\epsilon}=1,$ that transforms as some irreducible representation of $S_3$, $\Gamma = \mathbf{1}, \mathbf{1'}$ or $\mathbf{2}$, we can determine $\Gamma$ using the following two real numbers,
\begin{eqnarray}
\mathcal{X} &=& ~~~~~~\innerproduct{\epsilon|U(x)}{\epsilon}, \nonumber\\
\mathcal{A} &=& Real(\innerproduct{\epsilon|U(a)}{\epsilon}) \nonumber,
\end{eqnarray}  
 where, $a$ and $x$ are the two generators of $S_3$ with presentation $\innerproduct{a,x}{a^3=x^2=1, xax = a^{-1}}$. Specifically,
\begin{eqnarray}
\mathcal{A} = -0.5 &\implies&  \Gamma = \mathbf{2} \nonumber \\
\left(\mathcal{A}, \mathcal{X} \right) = \left(1,~~1 \right) &\implies&  \Gamma = \mathbf{1} \nonumber \\
\left(\mathcal{A}, \mathcal{X} \right) = \left(1,-1 \right) &\implies&  \Gamma = \mathbf{1'} \nonumber 
\end{eqnarray}
\end{theorem}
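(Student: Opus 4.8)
The plan is to unpack the hypothesis that $\ket{\epsilon}$ ``transforms as an irrep $\Gamma$'' into the statement that $\mathcal{V} := \mathrm{span}\{U(g)\ket{\epsilon} : g\in S_3\}$ is a $U$-invariant subspace carrying the irreducible representation $\Gamma$, and then to evaluate $\mathcal{X}$ and $\mathcal{A}$ on $\mathcal{V}$. Since each $U(g)$ is unitary on the ambient Hilbert space and preserves $\mathcal{V}$, the restriction $U|_{\mathcal V}$ is a \emph{unitary} representation equivalent to $\Gamma$; and because a unitary representation equivalent to a fixed irrep is unitarily equivalent to any chosen unitary model of it, I may pick an orthonormal basis of $\mathcal{V}$ in which $U(a)$ and $U(x)$ are exactly the matrices $\chi^{\mathbf 1}$, $\chi^{\mathbf 1'}$, or $\Gamma^{\mathbf 2}$ listed in the paper. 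After this reduction the theorem is three short computations together with the observation that the three outputs are distinct.

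First I would note that the asymmetry in the definitions of $\mathcal{X}$ and $\mathcal{A}$ is not accidental: from $x^2=1$ and unitarity of $U(x)$ we get $U(x)^\dagger = U(x)$, so $\mathcal{X}=\innerproduct{\epsilon|U(x)}{\epsilon}$ is automatically real, whereas $\innerproduct{\epsilon|U(a)}{\epsilon}$ is generically complex and one must project onto its real part. Then: for $\Gamma=\mathbf 1$ one has $U(g)\ket{\epsilon}=\ket{\epsilon}$ for all $g$, so $(\mathcal{A},\mathcal{X})=(1,1)$; for $\Gamma=\mathbf 1'$ one has $U(a)\ket{\epsilon}=\ket{\epsilon}$ and $U(x)\ket{\epsilon}=-\ket{\epsilon}$, so $(\mathcal{A},\mathcal{X})=(1,-1)$. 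For $\Gamma=\mathbf 2$, in the reduced basis $\ket{f_\pm}$ one has $U(a)=\mathrm{diag}(\omega,\omega^*)$ and $U(x)$ the swap; writing $\ket{\epsilon}=c_+\ket{f_+}+c_-\ket{f_-}$ with $|c_+|^2+|c_-|^2=1$ gives $\innerproduct{\epsilon|U(a)}{\epsilon}=|c_+|^2\omega+|c_-|^2\omega^*$, whose real part is $\mathcal{A}=-\tfrac12(|c_+|^2+|c_-|^2)=-\tfrac12$ for \emph{every} such $\ket{\epsilon}$, since $\mathrm{Re}\,\omega=\mathrm{Re}\,\omega^*=-\tfrac12$. (For completeness, $\mathcal{X}=2\,\mathrm{Re}(\bar c_+ c_-)\in[-1,1]$ is then unconstrained, explaining why $\mathcal{X}$ carries no information in the $\mathbf 2$ sector.) Finally $-\tfrac12\neq 1$, so $\mathcal{A}$ alone separates $\mathbf 2$ from the one-dimensional irreps, and among the latter $\mathcal{X}=\pm1$ separates $\mathbf 1$ from $\mathbf 1'$ — precisely the stated dictionary.

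The one step that deserves care, and which I regard as the crux, is the $\mathbf 2$ case: the content is exactly that $\innerproduct{\epsilon|U(a)}{\epsilon}$ is a convex combination of the two eigenvalues $\omega,\omega^*$ of $U(a)|_{\mathcal V}$, so that taking the real part annihilates the only $\ket{\epsilon}$-dependence and $\mathcal{A}$ becomes a genuine state-independent label of the irrep rather than of the particular vector. That $U(a)|_{\mathcal V}$ has eigenvalues exactly $\omega$ and $\omega^*$ is immediate from the reduction to standard form above; basis-freely, it follows from $a^3=1$ (eigenvalues are cube roots of unity), the fact that $U(a)|_{\mathcal V}$ cannot be scalar — a scalar $\lambda$ would satisfy $\lambda^{-1}=\lambda$ by the relation $xax=a^{-1}$, hence $\lambda=1$ by $a^3=1$, forcing the representation to factor through $S_3/\langle a\rangle\cong\ztwo$ and contradicting $\dim\Gamma=2$ — and $\chi^{\mathbf 2}(a)=-1$, which pins the unordered eigenvalue pair to $\{\omega,\omega^*\}$. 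Everything else is arithmetic with $\omega=e^{2\pi i/3}$.
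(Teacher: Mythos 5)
Your proof is correct and follows essentially the same route as the paper's: expand $\ket{\epsilon}$ in the eigenbasis of $U(a)$ restricted to the irrep, observe that $\innerproduct{\epsilon|U(a)}{\epsilon}$ is a convex combination of the eigenvalues $\omega,\omega^*$ (so its real part is $-\tfrac12$ independent of the state), and handle the two one-dimensional irreps by direct evaluation of $\mathcal{A}$ and $\mathcal{X}$. Your use of general complex coefficients $c_\pm$ (where the paper writes the real parametrization $\cos\theta,\sin\theta$) and your basis-free justification that the eigenvalue pair must be $\{\omega,\omega^*\}$ are minor tightenings of the same argument, not a different approach.
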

\begin{proof}
Let us first consider the case when $\ket{\epsilon}$ is some vector in the 2D irrep $\mathbf{2}$. We can expand this vector in the orthonormal eigenbasis of the generator $a$ of the $\mathbf{2}$ representation, $\ket{\omega}, \ket{\omega^*}$,  with eigenvalues $\omega, \omega^*$ respectively, where $\omega = e^{\frac{2 \pi i}{3}}$ (see Appendix.~\ref{app:building symmetric ham}).
\begin{equation}
\ket{\epsilon} = \cos\theta \ket{\omega} + \sin\theta \ket{\omega^*}. \\
\end{equation}
Acting on it by $U(a)$,
\begin{eqnarray}
U(a) \ket{\epsilon} &=& \omega \cos\theta \ket{\omega} + \omega^* \sin\theta \ket{\omega^*} \nonumber\\
\innerproduct{\epsilon|U(a)}{\epsilon} &=& \omega \cos^2\theta + \omega^* \sin^2\theta  \nonumber\\
Real(\innerproduct{\epsilon|U(a)}{\epsilon})&=& Real(\omega) (cos^2\theta + \sin^2\theta) \nonumber\\
\implies \mathcal{A} &=& Real(\omega) = -0.5 \nonumber 
\end{eqnarray}
Let us now consider the case when $\ket{\epsilon}$ transforms as either of the 1D irreps. Since the representation of the generator $a$ is simply $1$ for both 1D irreps, we simply have 
\begin{eqnarray}
U(a) \ket{\epsilon} = \ket{\epsilon}\\
\implies \mathcal{A} = \innerproduct{\epsilon|U(a)}{\epsilon} = 1
\end{eqnarray}
Thus, we can see that $\mathcal{A}$ can separate the 2D irrep $\mathbf{2}$ form the 1D irreps. Also, $\mathcal{A}=-0.5$ is necessary and sufficient for $\ket{\epsilon}$ to transform as $\mathbf{2}$. Furthermore, if $\mathcal{A}=1$, we can determine which 1D irrep $\ket{\epsilon}$ transforms as by considering the transformation under the generator $x$ whose representation is $\pm 1$ for $\Gamma$ being $\mathbf{1}$ or $\mathbf{1'}$ respectively. Thus, if $\mathcal{A} = 1$, 
\begin{eqnarray}
U(x)\ket{\epsilon} = \pm \ket{\epsilon} \nonumber \\
\implies \mathcal{X} = \innerproduct{\epsilon|U(x)}{\epsilon} = \pm1 \nonumber
\end{eqnarray}
implies $\ket{\epsilon}$ transforms as $\mathbf{1}$ or $\mathbf{1'}$ respectively. This concludes the proof.
\end{proof}

In our numerics, we diagonalize our Hamiltonian in the 1D irrep sector by projecting it into the appropriate basis. As discussed above, this means that we need to restrict to the basis states that are left invariant under the action of $U(a)= \bigotimes_{i=1}^{L} V(a)_i$. To see how this is done, consider the action of operator $U(a)$ on a many-body basis state labeled by $S^z$ eigenvalues on each spin
\begin{multline}
U(a) \ket{m_1,m_2,\ldots,m_L} = \omega^{m_1+m_2+\ldots+m_L}\ket{m_1,m_2,\ldots,m_L} \\ = \omega^{S^z_{tot}}\ket{m_1,m_2,\ldots,m_L}
\end{multline}
We need $\omega^{S^z_{tot}} = 1$ which means
\begin{equation} \label{eq:Sz=0mod3}
S^z_{tot} = \sum_{i=1}^L m_i = 0(mod~3).
\end{equation} 
Since $3^{L-1}$ of the $3^L$ basis states satisfy the condition of Eq~.\ref{eq:Sz=0mod3}, this helps us diagonalize an $L$ site Hamiltonian for the price of $L-1$.

\section{Spin glass diagnostics for $S_3$ subgroups}\label{app:SG}
In this appendix, we give details of how the spin-glass (SG) order parameter used in Sec.~\ref{sec:ea} was constructed and also numerical evidence for the assertion that the high disorder region at $\kappa=0$ does not in any form spontaneously break the $S_3$ symmetry. 

First, let us list the elements of $S_3=\{1,a,a^2,x,x a,x a^2\}$ and its five subgroups written in terms of the generators $a,x$ defined in Sec.~\ref{sec:model} and Appendix.~\ref{app:building symmetric ham}:
\begin{enumerate}
	\item $\mathbb{Z}_3 = \{1,a,a^2\}$,
	\item $\mathbb{Z}_{2A} = \{1,x\}$,
	\item $\mathbb{Z}_{2B} = \{1,xa\}$,
	\item $\mathbb{Z}_{2C} = \{1,xa^2\}$,	
	\item $\{1\}$.
\end{enumerate}
For each subgroup $H \subset G$, we design a SG diagnostic that detects SSB of $G \rightarrow H$ and takes the form 
\begin{multline}
\chi^{SG}_H = \frac{1}{L-1} \sum_{i \neq j=1}^{L}  |\innerproduct{\epsilon|\mathcal{O}_{H;i} \mathcal{O}_{H;j}}{\epsilon} -  \innerproduct{\epsilon|\mathcal{O}_{H;i}}{\epsilon} \innerproduct{\epsilon|\mathcal{O}_{H;j}}{\epsilon}|^2.
\end{multline}
 $\mathcal{O}_H$ are local Hermitian order parameters that are chosen to have the following properties under symmetry transformation by $U(g) = \bigotimes_{i=1}^L V(g)_i$
\begin{enumerate}
	\item $\mathcal{O}_H$ transforms trivially under $H$:\\ $U(g)\mathcal{O}_H U^\dagger(g)$ = $\mathcal{O}_H$, $\forall g \in H$.
	\item $\mathcal{O}_H$ transforms non-trivially under $G/H$:\\ $U(g)\mathcal{O}_H U^\dagger(g)$ $\neq$ $\mathcal{O}_H$, $\forall g \notin H$.	
\end{enumerate}
Note that $\chi^{SG}_H$ is invariant under the redefinition $\mathcal{O}_H \rightarrow \mathcal{O}_H + \xi \mathbb{1}$. It can be checked that the following operators satisfy the above properties
\begin{itemize}
	\item $\mathcal{O}_{\mathbb{Z}_3}$ = $S^z$
	\item $\mathcal{O}_{\mathbb{Z}_{2A}}$ = $S^z S^x + S^x S^z$ = $\frac{1}{\sqrt{2}}
	\begin{pmatrix}
		0 & 1 & 0\\
		1 & 0 & -1 \\
		0 & -1 & 0
	\end{pmatrix}$
	\item $\mathcal{O}_{\mathbb{Z}_{2B}}$ = $V(a)\mathcal{O}_{\mathbb{Z}_{2A}}V^\dagger(a)$ = $\frac{1}{\sqrt{2}}
	\begin{pmatrix}
	0 & \omega & 0\\
	\omega^* & 0 & -\omega \\
	0 & -\omega^* & 0
	\end{pmatrix}$
		\item $\mathcal{O}_{\mathbb{Z}_{2C}}$ = $V^\dagger(a)\mathcal{O}_{\mathbb{Z}_{2A}}V(a)$ = $\frac{1}{\sqrt{2}}
	\begin{pmatrix}
	0 & \omega^* & 0\\
	\omega & 0 & -\omega^* \\
	0 & -\omega & 0
	\end{pmatrix}$
	\item $\mathcal{O}_{\{1\}}$ = $S^x$
\end{itemize}

Note that it is important to make sure that the disconnected part of the two point correlation function, $\innerproduct{\epsilon|\mathcal{O}_{H;i}}{\epsilon} \innerproduct{\epsilon|\mathcal{O}_{H;j}}{\epsilon}$  is subtracted when constructing $\chi^{SG}_H$. In previous work like Ref~\cite{bardarson2014}, the local order parameter $\sigma^z$ transformed as a non-trivial irrep of $G/H$  ($\ztwo$ in their case). It is then automatically guaranteed that $\innerproduct{\epsilon}{\sigma^z_i|\epsilon}=0$. Similarly, the local order parameter used in the main text to detect $S_3 \rightarrow \zthree$ SSB, $S^z_i$ transforms as a non-trivial irrep of $S_3/\zthree \cong \ztwo$ which also ensures $\innerproduct{\epsilon}{S^z_i|\epsilon} = 0$ and hence we leave out of the definition of $\ea$. This would not be true if we used a different $\mathcal{O}_{\mathbb{Z}_3}$ like $S^z + \xi (S^z)^2$ or even $S^z + \xi \mathbb{1}$ both of which are equally good to detect $S_3 \rightarrow \zthree$ SSB but would need subtraction of the disconnected part. Similarly, the other SG order parameters we used namely $\mathcal{O}_{\mathbb{Z}_{2A/B/C}}$ and $\mathcal{O}_{\{1\}}$ also need subtraction. 

Fig.~\ref{fig:ea mean all} shows the different SG diagnostics as a function of $\lambda$ for $\kappa=0,1$ averaged over eigenstates across disorder realizations. It can be seen that only $\overline{\ea}$, which detects SSB $S_3 \rightarrow \zthree$ approaches a value that increases with system size in the region discussed in Sec.~\ref{sec:ea}. The value of other SG diagnostics becomes increasingly smaller or approaches a constant value with system size for all  $\lambda$ and $\kappa$ indicating that SSB to that residual subgroup has not taken place in the eigenstates. 

 \begin{figure}[htb]
	\centering
	\begin{tabular}{cc}		
		\includegraphics[width=43mm]{./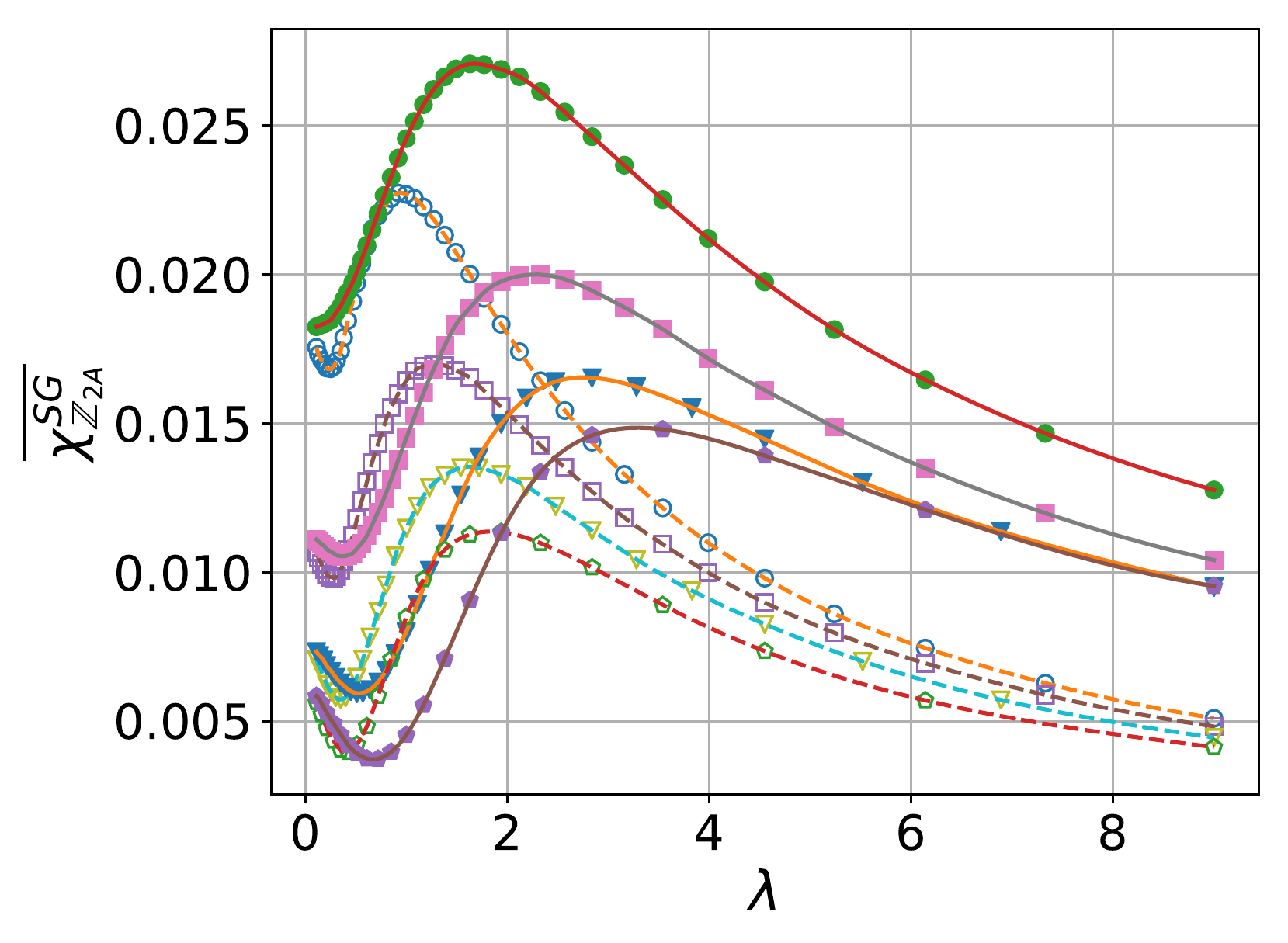}	&
		\includegraphics[width=43mm]{./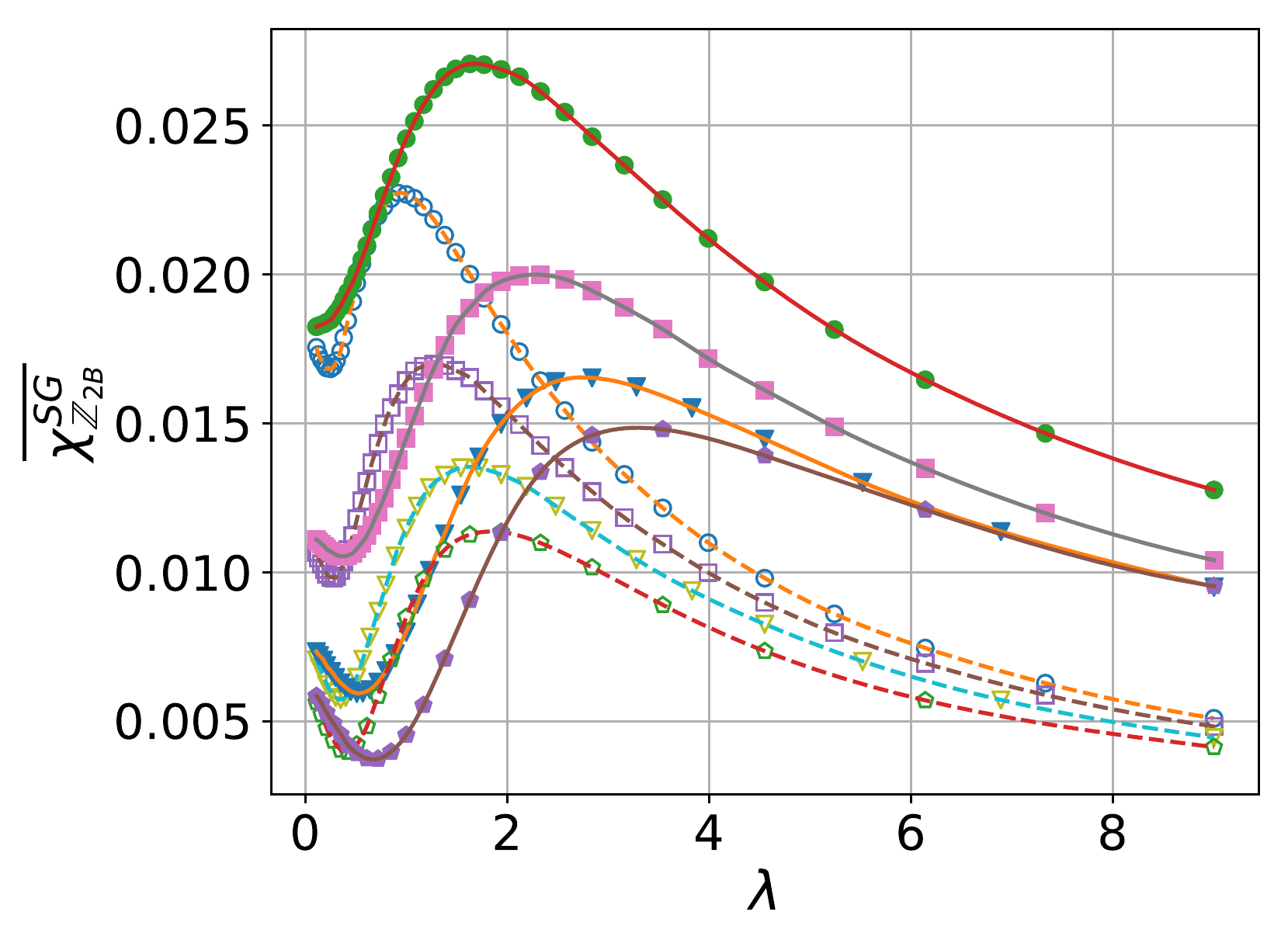}	\\
		\includegraphics[width=43mm]{./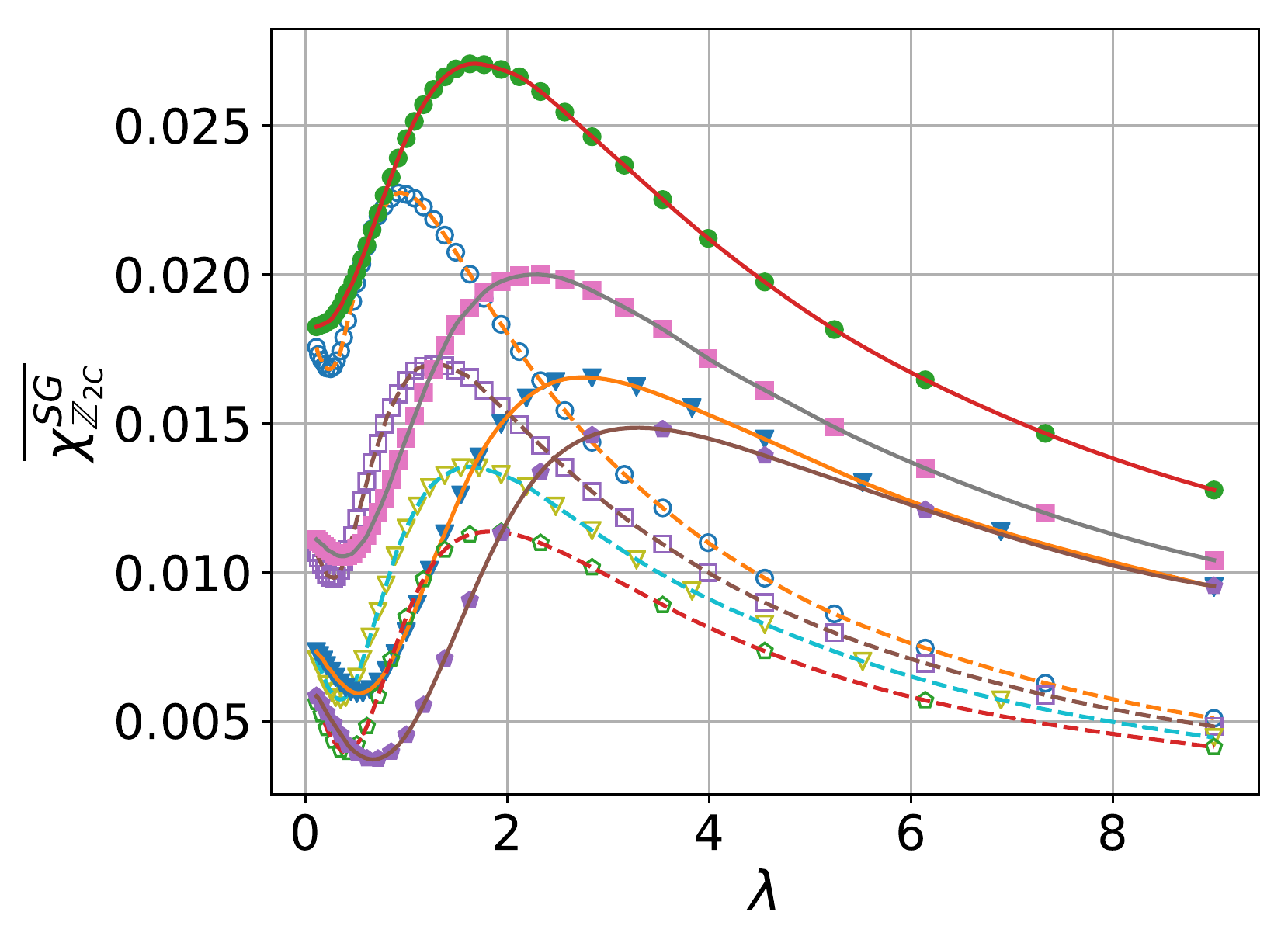}	&
		\includegraphics[width=43mm]{./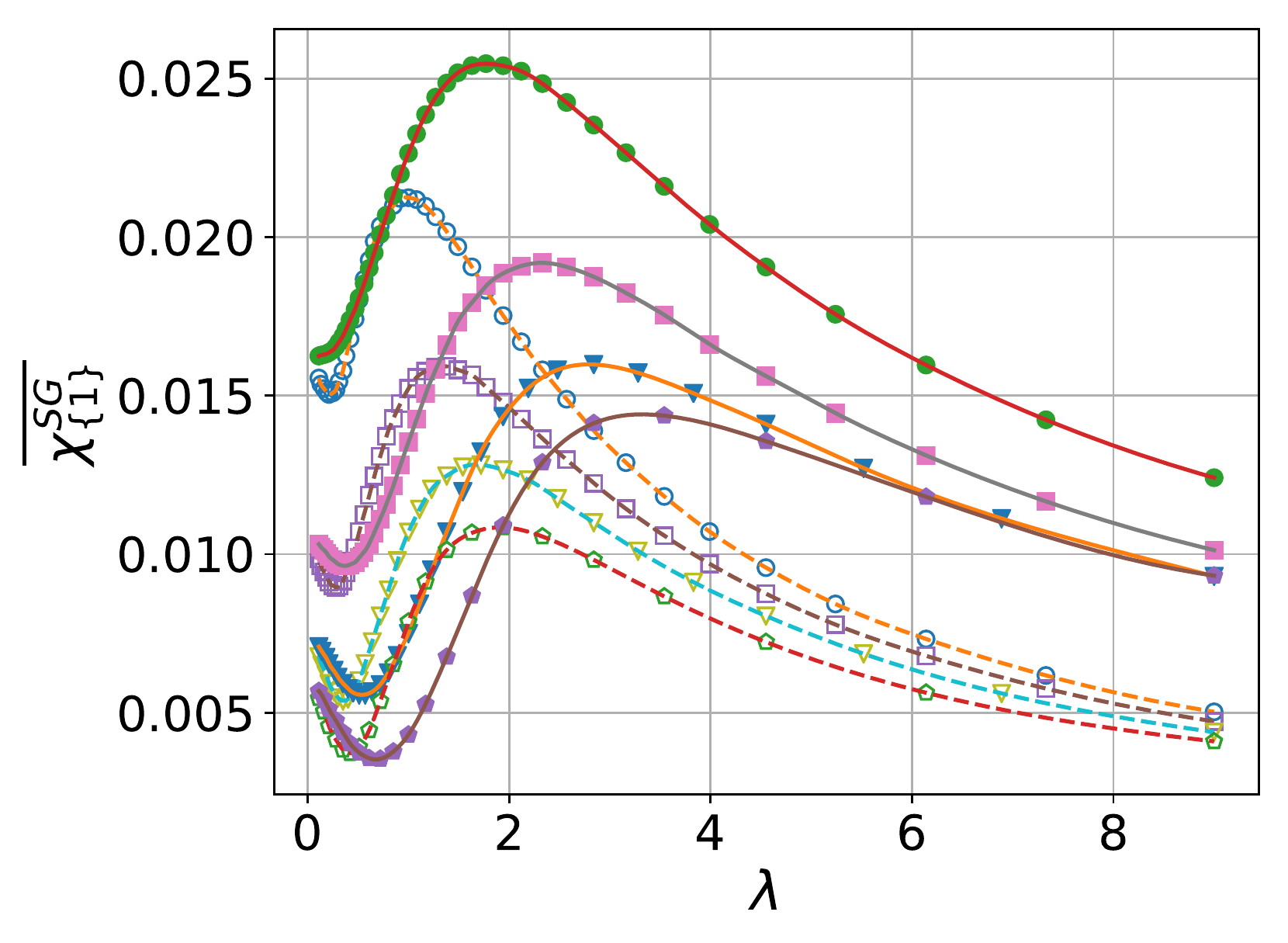} \\
		\multicolumn{2}{l}{\includegraphics[width=80mm]{./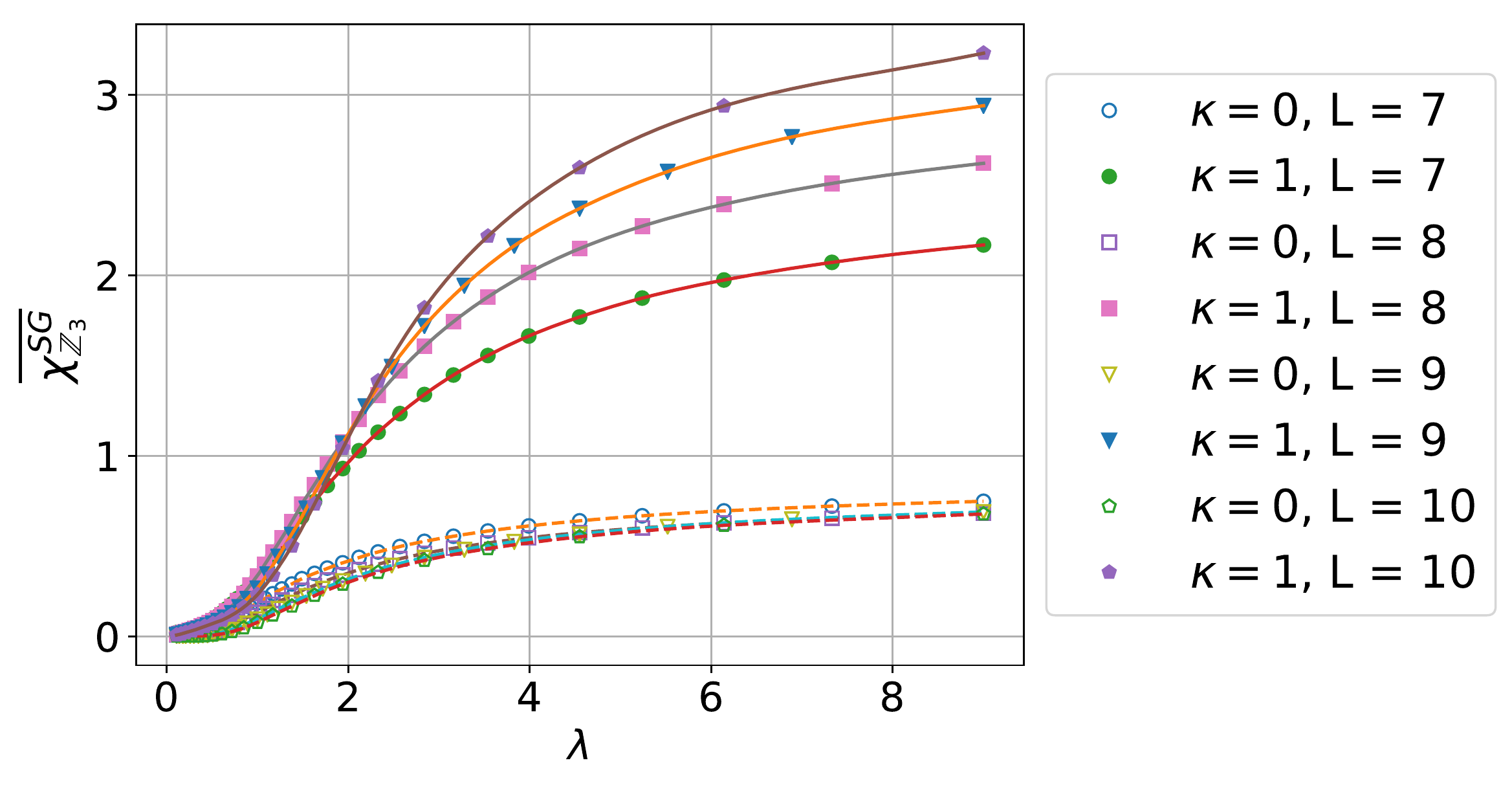}}		
	\end{tabular}
	\caption{ (Color online) SG diagnostics for different subgroups versus $\lambda$ for $\kappa = 0$  and $\kappa= 1$ with spline fit (solid for $\kappa=1$, dashed for $\kappa=0$). 243 eigenstates per disorder realization that transform as 1D irreps sampled for 800 (7,8 sites), 879 (9 sites) and  715 (10 sites) disorder realizations respectively.  The plot for $\overline{\ea}$ is also shown in the main text. \label{fig:ea mean all}}	 					
\end{figure}

\end{document}